\documentclass[12pt]{amsart}
\usepackage{amssymb}
\usepackage{color}
\usepackage{amsmath,epic,curves,amscd}
\usepackage[english]{babel}
\usepackage{graphicx}
\usepackage{comment}
\usepackage{appendix}
\usepackage{mathdots}
\usepackage[all]{xy}
\usepackage{algorithm}
\usepackage{algpseudocode}

\pagestyle{plain}

\newtheorem{claim}{}[section]
\newtheorem{theorem}[claim]{Theorem}

\newtheorem{lemma}[claim]{Lemma}
\newtheorem{proposition}[claim]{Proposition}
\newtheorem{corollary}[claim]{Corollary}

\theoremstyle{remark}

\renewenvironment{proof}{\noindent{\it Proof. \hskip0pt}}
                      {$\square$\par\medskip}

\textwidth 15.5 true cm
\textheight 23.9 true cm
\hoffset = -1.5 true cm
\voffset = -1.5 true cm
\setlength{\unitlength}{1.0 cm}

\newcommand\xx{{\text{\sf X}}}
\newcommand\lan{\langle}
\newcommand\ran{\rangle}

\newcommand\ot{\otimes}

\newcommand\image{{\text{\rm Im}}\,}

\newcommand\inte{{\text{\rm int}}\,}
\newcommand\ttt{{\text{\rm t}}}

\newcommand\re{{\text{\rm Re}}\,}

\newcommand\bfi{{\bf i}}
\newcommand\bfj{{\bf j}}

\newcommand\sa{{\rm sa}}
\newcommand\ph{{\rm ph}}

\newcommand\res{{\text{\rm res}}}

\newcommand\calv{{\mathcal V}}
\newcommand\calg{{\mathcal G}}
\newcommand\calt{{\mathcal T}}
\newcommand\calvnR{{\mathcal V}_n^{\mathbb R}}

\begin{document}
\baselineskip 6.0 truemm
\parindent 1.5 true pc

\title{Separability of multi-qubit states in terms of diagonal and anti-diagonal entries}

\begin{abstract}
We give separability criteria for general multi-qubit states in
terms of diagonal and anti-diagonal entries. We define two numbers
which are obtained from diagonal and anti-diagonal entries,
respectively, and compare them to get criteria. They give rise to
characterizations of separability when all the entries are zero
except for diagonal and anti-diagonal, like
Greenberger-Horne-Zeilinger diagonal states. The criteria is strong
enough to detect nonzero volume of entanglement with positive
partial transposes.
\end{abstract}

\author{Kil-Chan Ha, Kyung Hoon Han and Seung-Hyeok Kye}
\address{Kil-Chan Ha, Faculty of Mathematics and Statistics, Sejong University, Seoul 143-747, Korea}
\email{kcha at sejong.ac.kr}
\address{Kyung Hoon Han, Department of Data Science, The University of Suwon, Gyeonggi-do 445-743, Korea}
\email{kyunghoon.han at gmail.com}
\address{Seung-Hyeok Kye, Department of Mathematics and Institute of Mathematics, Seoul National University, Seoul 151-742, Korea}
\email{kye at snu.ac.kr}
\thanks{Both KHH and SHK were partially supported by NRF-2017R1A2B4006655, Korea}
\thanks{KCH was partially supported by NRF-2016R1D1A1A09916730, Korea}

\subjclass{81P15, 15A30}

\keywords{multi-qubit states, X-states, separability criterion, irreducible balanced multisets, phase identities, phase difference}

\maketitle

\section{Introduction}

Entanglement is one of the key resources in the current quantum
information and computation theory, and it is very important to
distinguish entanglement among separability. Positivity of partial
transposes is one of the earlier separability criteria: Every
separable state is of PPT \cite{{choi-ppt},{peres}}. The converse is
true for $2\otimes 2$ and $2\otimes 3$ systems
\cite{{horo-1},{stormer},{woronowicz}}, and so we may confirm the
separability in those systems without a decomposition into the sums
of pure product states, which is just the definition of (full)
separability. A non-separable state is called entangled.

In this paper, we propose a separability criterion for multi-qubit systems by an inequality
between two numbers arising from the diagonal and anti-diagonal entries, respectively. This criterion
is also sufficient for separability when states have nonzero entries only for diagonal and anti-diagonal parts.
This is one of very few criteria in the literature with which we can confirm the separability without decomposition.

A state is called an {\sf X}-shaped state, or an {\sf X}-state if all the entries are zero
except for diagonal and anti-diagonal entries. Multi-qubit {\sf X}-states appear in various contexts
in quantum information theory \cite{{dur},{mendo},{rau},{vin10},{wein10},{yu}}.
Greenberger-Horne-Zeilinger diagonal states \cite{{bouw},{GHSZ},{GHZ}} are typical examples of
multi-qubit {\sf X}-states which have real anti-diagonals. Our criterion characterizes the separability
of GHZ diagonal state.

Following the earlier work
\cite{{guhne_pla_2011},{guhne10},{kay_2011}} on the separability of
three qubit GHZ diagonal states, the second and the third authors
\cite{han_kye_GHZ} characterized separability of three qubit GHZ
diagonal states. After they \cite{han_kye_phase} noticed that the
phases, the angular parts, of anti-diagonal entries play important
roles, a complete characterization \cite{chen_han_kye} of
separability of three qubit {\sf X}-states has been obtained. In
this paper, we explore analogues for general multi-qubit systems. The
main tool will be the duality \cite{kye_multi_dual} between
separability of multi-partite states and positivity of multi-linear
maps. In the next section, we fix the notations we use and summarize
the results in this paper.

The first and the second authors are grateful to Sanbeot community for their hospitality during their stay at the community hall.

\section{Notations and Summary of the results}

A function from the set $[n]=\{1,2,\dots,n\}$ into $\{0,1\}$ will be
called an $n$-bit index, just an $n$-index or index which will be
denoted by a string of $0,1$. The set of all $n$-bit indices will be
denoted by $I_{[n]}$. For example, we have $I_{[1]}=\{0,1\}$,
$I_{[2]}=\{00, 01,10,11\}$ and
$I_{[3]}=\{000,001,010,011,100,101,110,111\}$.  For a subset
$S\subset\{1,2,\dots,n\}$ and an index ${\bf i}\in I_{[n]}$, we
define the index $\bar{\bf i}^S\in I_{[n]}$ by
$$
\bar{\bf i}^S(k)=
\begin{cases}{\bf i}(k)+1\mod 2,\qquad &k\in S,\\{\bf i}(k),\qquad &k\notin S.\end{cases}
$$
In short, $\bar{\bf i}^S$ is obtained by switching the $k$-th
digit for $k\in S$. When $S=\{1,\dots,n\}$ is the whole set,
the index $\bar{\bf i}^{\{1,\dots,n\}}$ will be denoted by just $\bar{\bf i}$.

We denote by $\calvnR$ (respectively $\calv_n^+$) the set of all real
valued (respectively nonnegative) functions on the set $I_{[n]}$, and
by $\calv_n^\sa$ the set of all complex functions $u:\bfi\mapsto u_\bfi$ on $I_{[n]}$ satisfying the
relation $u_{\bar {\bf i}}=\bar u_{\bf i}$ for each $\bfi\in I_{[n]}$. We note that both
$\calvnR$ and $\calv_n^\sa$ are real vector spaces of dimension
$2^n$. For $s\in \calv_n^+$ and $u\in \calv_n^\sa$, we denote by
$X(s,u)$ the $n$-qubit self-adjoint matrix in the $n$-fold tensor product $M_2^{\otimes n}$ with
the entries $[w_{{\bf i},{\bf j}}]$ given by
$$
w_{{\bf i},{\bf j}}=\begin{cases}s_{\bf i},\qquad &{\bf j}={\bf i},\\
                     u_{\bf i},\qquad &{\bf j}=\bar {\bf i},\\
                     0,\qquad &{\text{\rm otherwise}}.
\end{cases}
$$
If we endow $I_{[n]}$ with the lexicographic order then $X(s,u)$ can be considered as a usual matrix
as follows:
$$
X(s,u)=\left(
\begin{matrix}
s_{00\dots 0} &&&&&&&&& u_{00\dots 0}\\
& \ddots &&&&&&& \iddots &\\
&& s_{\bfi} &&&&& u_{\bfi} & \\
&&& \ddots &&& \iddots &&\\
&&&& s_{01\dots 1}&u_{{01\dots 1}} &&&\\
&&&&  u_{10\dots 0}&s_{10\dots 0}&&&\\
&&& \iddots &&& \ddots &&\\
&&  u_{\bar\bfi} &&&&& s_{\bar\bfi} &\\
& \iddots &&&&&&& \ddots &\\
 u_{11\dots 1} &&&&&&&&& s_{11\dots 1}
\end{matrix}
\right).
$$

We first determine pairs $(s,u)\in \calv_n^+\times \calv_n^\sa$ for which
$W=X(s,u)$ is an entanglement witness. To do this, we introduce the notation:
$$
z^{\bf i}=\prod_{k=1}^n z_k^{1-2\bfi(k)},
$$
for ${\bf i}\in I_{[n]}$ and $z=(z_1,z_2,\dots,z_n)\in\mathbb C^n$
with $z_k\neq 0$ for $k=1,2,\dots,n$. We have $(e^{{\rm i}\theta_1}, e^{{\rm i}\theta_2}, e^{{\rm
i}\theta_3})^{001} =e^{{\rm i}\theta_1} e^{{\rm i}\theta_2} e^{-{\rm
i}\theta_3}$, for example. With this notation, we define the
following two numbers:
\begin{equation}\label{A_nB_n}
\delta_n(s)=\inf_{r\in\mathbb R_+^n}\sum_{{\bf i}\in{I}_{[n]}} s_{\bf i}r^{\bf i},\qquad
\|u\|_{\xx_n}=\sup_{\alpha\in\mathbb T^n}\sum_{{\bf i}\in{I}_{[n]}} u_{\bf i}\alpha^{\bf i},
\end{equation}
for $s\in \calv_n^+$ and $u\in \calv_n^\sa$,
where $\mathbb R_+=(0,\infty)$ and $\mathbb T$ is the unit circle on the complex plane.
Note that $u\mapsto \|u\|_{\xx_n}$ defines a norm on the real vector space $\calv_n^\sa$,
because the set $\{ \sum_{{\bf i}\in{I}_{[n]}} u_{\bf i}\alpha^{\bf i}\}$ of real numbers is
symmetric with respect to the origin. As for $\delta_n(s)$, we have the relation $\delta_n(\lambda s)=\lambda \delta_n(s)$ for $\lambda\ge 0$.

We show in Theorem \ref{block-pos} that a non-positive matrix $W=X(s,u)\in M_2^{\otimes n}$ is an entanglement witness if and only if
the inequality
$$
\delta_n(s)\ge \|u\|_{\xx_n}
$$
holds. In order to characterize the separability of an $n$-qubit state $\varrho=X(a,c)$,
we also introduce in Section \ref{sec-sep} the numbers
\begin{equation}\label{Delta_dual}
\begin{aligned}
\Delta_n(a)&=\inf \{\lan s,a\ran: s\in \calv_n^+,\ \delta_n(s)=1\},\qquad\\
\|c\|^\prime_{\xx_n}
&=\sup \{\lan u,c\ran : u\in \calv_n^\sa,\ \|u\|_{\xx_n}=1\},
\end{aligned}
\end{equation}
for $a\in\calv_n^+$ and $c\in\calv_n^\sa$, where $\lan s,a\ran=\sum_{\bfi\in I_{[n]}}s_\bfi a_\bfi$ is the usual
bi-linear pairing. Note that $\|\cdot \|^\prime_{\xx_n}$ is nothing but the dual norm of
$\|\cdot\|_{\xx_n}$. The main result in this paper, Theorem \ref{x-sep-th}, tells us that the state $\varrho=X(a,c)$ is separable if and only if
the inequality
$$
\Delta_n(a)\ge \|c\|^\prime_{\xx_n}
$$
holds. We show that the {\sf X}-part of a separable multi-qubit state is again separable,
and so the above inequality gives rise to a separability criterion for general $n$-qubit states
in terms of diagonal and anti-diagonal entries.

In the remainder of this paper, we estimate the numbers $\Delta_n(a)$ and $\|c\|^\prime_{\xx_n}$
to get separability criteria for multi-qubit states.
Recall that a multiset is a collection which allows repetition of elements, unlike a set.
A multiset $T$ of $n$-indices will be said to be {\sl balanced} if
$r\mapsto \prod_{{\bf i}\in{T}}r^{\bf i}$ is the constant function $1$ on $\mathbb R^n_+$. This happens if and only if
the number of indices ${\bf i}$ in $T$ with ${\bf i}(k)=0$ coincides with
the number of indices ${\bf i}\in{T}$ with ${\bf i}(k)=1$
for every $k=1,2,\dots,n$.
The cardinality of $T$ will be called the {\sl order} of ${T}$ and denoted by $\#{T}$.
The order of a balanced multiset must be even.
If $\#{T}=2$ then $\{{\bf i},{\bf j}\}$ is a balanced multiset
if and only if $\bar {\bf j}={\bf i}$. We show in Section \ref{sec-multiset} that the inequality
$\left(\prod_{\bfi\in  T}a_\bfi\right)^{1/\#(T)}\ge \Delta_n(a)$
holds for every balanced multiset $T$.
We say that a balanced multiset is {\sl irreducible}
when it cannot be partitioned into balanced multisets.
It is easily seen that the set $\calg_n$ of all irreducible balanced multisets of $n$-indices
is finite. We get in Section \ref{sec-multiset} the following upper bound
\begin{equation}\label{def_tilde_Delat}
\tilde\Delta_n(a):=\min\left\{ \left(\prod_{\bfi\in \bf T}a_\bfi\right)^{1/\#(T)} : T\in\calg_n\right\}\ge \Delta_n(a)
\end{equation}
for $\Delta_n(a)$. The number $\tilde\Delta_3(a)$ for the three qubit case appears
in G\"uhne's separability criterion \cite{guhne_pla_2011}. We show that the equality $\Delta_3(a)=\tilde\Delta_3(a)$ holds
for three qubit case, which recovers the main result in \cite{chen_han_kye}.
The notion of multisets is also useful to deal with the anti-diagonal part, and will be used to
characterize separability of half rank states $X(a,c)$ in Theorem \ref{sep-half-rank}. Especially, we show that
$\prod_{{\bfi}\in T}c_\bfi$ is constant for every irreducible balanced multiset $T$ of order four.
The number of irreducible balanced multisets of order four increases very rapidly as the number of qubits increases.

For a given $c\in\calv_n^\sa$ with the polar decompositions $c_\bfi=|c_\bfi|e^{{\rm i}\theta_\bfi}$, the function
$\theta\in\calvnR$ is called the {\sl phase part} of $c\in\calv_n^\sa$. Note that the phase part of a vector
in $\calv_n^\sa$ belongs to the subspace $\calv_n^\ph$ of $\calvnR$ consisting of all $\theta\in\calvnR$ satisfying
the relation $\theta_{\bar\bfi}=-\theta_\bfi$ for each $\bfi\in I_{[n]}$. Note that $\calv_n^\ph$ is of $2^{n-1}$
dimension. Section \ref{sec-phase} will be devoted to analyze the phase parts of the anti-diagonal entries of separable states.
The main tool is the linear map
${\Theta_n}:\mathbb R^n \to \calvnR$
defined by
\begin{equation}\label{theta}
[\Theta_n(e_k)]_\bfi=\begin{cases}+1,\quad &{\text{\rm if}}\ {\bf i}(k)=0,\\ -1, \quad &{\text{\rm if}}\ {\bf i}(k)=1,\end{cases}
\end{equation}
for each $k=1,2,\dots,n$, where $\{e_k\}$ is the usual orthonormal basis of $\mathbb R^n$.
We say that $\theta\in\calv_n^\ph$ {\sl satisfies the phase identities} when it belongs to the image of $\Theta_n$.
We construct a basis of the orthogonal complement $\calv_n^\ph\ominus\image\Theta_n$ arising from irreducible
balanced multisets, to express phase identities. In a circumstance, we will see that parts of the identities are
required for separability.

The {\sl phase difference} of $\theta\in\calv_n^\ph$ is defined
as the coset in the quotient $\calv_n^\ph/\image\Theta_n$ to which $\theta$ belongs. In Section \ref{sec-phase-diff},
we will show that the dual norm $\|c\|^\prime_{\xx_n}$ depends only on the phase difference of the phase part as well as the magnitudes.
We also show that $\|c\|^\prime_{\xx_n}$ is strictly greater than $\|c\|_\infty$ whenever $c\in\calv_n^\sa$ shares a
common magnitude of entries and has a nontrivial phase difference. From this,
we may construct boundary separable $n$-qubit states with full ranks for each $n\ge 3$.
This also tells us that our criterion is strong enough to detect PPT entanglement of nonzero volume.

\section{Multi-qubit {\sf X}-shaped entanglement witnesses}

We say that an $n$-qubit self-adjoint matrix $W$ in $\bigotimes_{k=1}^n M_2$ is {\sl block-positive}
when $\langle\varrho,W\rangle\ge 0$ for every separable state
$\varrho$. Note that a non-positive self-adjoint $W$ is an entanglement witness if and only if
it is block-positive. For a given partition $[n]=S\sqcup T$, we
defined in \cite{han_kye_optimal} the linear map $\phi_W^{S,T}$ from
$\bigotimes_{k\in S}M_2$ into $\bigotimes_{k\in T}M_2$, which is very useful to characterize the bi-separability
of multi-partite states. For a given $(n+1)$-qubit self-adjoint matrix $W=[w_{\bfi,\bfj}]_{{\bf i},{\bf j}\in{I}_{[n+1]}}$, we consider
the partition $[n+1]=\{n+1\}\sqcup [n]$ to get the map
\begin{equation}\label{phi-map}
\phi:=\phi^{\{n+1\},[n]}_W:M_2\to M_2^{\otimes n}.
\end{equation}
Following the construction in \cite{han_kye_optimal}, the map
$\phi$ sends $|i\rangle\langle j|\in M_2$ to $[w_{{\bf i}i,{\bf j}j}]_{{\bf i},{\bf j}\in{I}_{[n]}}\in M_2^{\otimes n}$,
and so $\lan\phi(|i\ran\lan j|),|{\bf i}\ran\lan {\bf j}|\ran=w_{{\bf i}i,{\bf j}j}$
for every ${\bf i},{\bf j}\in{I}_{[n]}$ and $i,j=0,1$.
Therefore, we have
$$
\lan W,|{\bf i}\ran\lan {\bf j}|\otimes |i\ran\lan j|\ran
=\lan W, |{\bf i}i\ran\lan {\bf j}j|\ran=w_{{\bf i}i,{\bf j}j}=\lan\phi(|i\ran\lan j|),|{\bf i}\ran\lan {\bf j}|\ran,
$$
which implies the following identity
$$
\lan\phi(a_{n+1}),a_1\otimes \cdots\otimes a_n\ran=
\lan W,a_1\otimes \cdots \otimes a_n\otimes a_{n+1}\ran,
$$
for every $2\times 2$ matrices $a_1,\dots,a_{n+1}$. Hence, we have the following:

\begin{lemma}\label{BP-induction}
An $(n+1)$-qubit self-adjoint matrix $W\in M_2^{\otimes (n+1)}$ is block-positive if and only
if $\phi(P)\in M_2^{\otimes n}$ is block-positive for every positive $P\in M_2$.
\end{lemma}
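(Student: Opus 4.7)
The plan is to combine the identity $\lan\phi(a_{n+1}),a_1\ot\cdots\ot a_n\ran=\lan W,a_1\ot\cdots\ot a_{n+1}\ran$ derived immediately above the statement with the standard reduction that block-positivity of a self-adjoint matrix in $M_2^{\ot m}$ is equivalent to nonnegativity of its pairing against every pure product state $|\psi_1\ran\lan\psi_1|\ot\cdots\ot|\psi_m\ran\lan\psi_m|$. This reduction is legitimate because separable states are by definition convex combinations of such rank-one product states and the Hilbert--Schmidt pairing is linear.

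For the forward direction I would assume $W$ is block-positive and fix an arbitrary $P\ge 0$ in $M_2$. Decomposing $P=\sum_k|v_k\ran\lan v_k|$ and using linearity of $\phi$, it suffices to show that $\phi(|v\ran\lan v|)$ is block-positive for every $v\in\mathbb C^2$. Testing against a pure $n$-qubit product state, the identity gives $\lan\phi(|v\ran\lan v|),|\psi_1\ran\lan\psi_1|\ot\cdots\ot|\psi_n\ran\lan\psi_n|\ran=\lan W,|\psi_1\ran\lan\psi_1|\ot\cdots\ot|\psi_n\ran\lan\psi_n|\ot|v\ran\lan v|\ran$, and the right-hand side is nonnegative because the argument on the right is a pure $(n+1)$-qubit product state.

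For the converse I would assume $\phi(P)$ is block-positive for every positive $P\in M_2$ and verify block-positivity of $W$ by pairing it with an arbitrary pure $(n+1)$-qubit product state $|\psi_1\ran\lan\psi_1|\ot\cdots\ot|\psi_{n+1}\ran\lan\psi_{n+1}|$. The identity converts this pairing into $\lan\phi(|\psi_{n+1}\ran\lan\psi_{n+1}|),|\psi_1\ran\lan\psi_1|\ot\cdots\ot|\psi_n\ran\lan\psi_n|\ran$, which is nonnegative since $|\psi_{n+1}\ran\lan\psi_{n+1}|\ge 0$ makes $\phi(|\psi_{n+1}\ran\lan\psi_{n+1}|)$ block-positive by hypothesis. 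The only conceptual point is the reduction to rank-one product states on either side; once that is in hand each direction collapses to a single application of the identity, so there is no genuine obstacle beyond careful bookkeeping of which positivity notion is tested at which stage.
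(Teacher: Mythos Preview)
Your argument is correct and is exactly the approach the paper takes: the identity $\lan\phi(a_{n+1}),a_1\ot\cdots\ot a_n\ran=\lan W,a_1\ot\cdots\ot a_{n+1}\ran$ is derived just before the lemma, and the paper treats the lemma as its immediate consequence via the reduction to pure product states that you spell out. Your spectral decomposition of $P$ to reduce to rank-one inputs is a harmless extra step, since the converse direction already only needs rank-one $P$'s and the forward direction then follows because block-positive matrices form a cone.
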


If $W$ is an {\sf X}-shaped $(n+1)$-qubit self-adjoint matrix with entries $\{w_{{\bf i},{\bf j}}:{\bf i},{\bf j}\in{I}_{[n+1]}\}$,
then $\phi(|0\rangle\langle 0|)$ and $\phi(|1\rangle\langle 1|)$ are
diagonal matrices with diagonal entries
$\{ w_{{\bf i}0,{\bf i}0}:{\bf i}\in{I}_{[n]}\}$ and $\{ w_{{\bf i}1,{\bf i}1}:{\bf i}\in{I}_{[n]}\}$,
respectively. We also note that $\phi(|0\rangle\langle 1|)$ and $\phi(|1\rangle\langle 0|)$ are
anti-diagonal matrix with the anti-diagonal entries
$\{ w_{{\bf i}0,\bar {\bf i}1}:{\bf i}\in{I}_{[n]}\}$ and
$\{ w_{{\bf i}1,\bar {\bf i}0}:{\bf i}\in{I}_{[n]}\}$, respectively.
We write
$P_{r,\alpha}=\left(\begin{matrix}r&\alpha\\ \bar\alpha& r^{-1}\end{matrix}\right)$. Then
sums of nonnegative scalar multiples of $P_{r,\alpha}$ with $r\in\mathbb R_+$ and $\alpha\in\mathbb T$
make a dense subset for $2\times 2$ positive matrices.
Now, we are ready to prove the main result in this section. Recall the definitions of
$\delta_n(s)$ and $\|u\|_{\xx_n}$ in (\ref{A_nB_n}) for $s\in \calv_n^+$ and $u\in\calv_n^\sa$.

\begin{theorem}\label{block-pos}
An $n$-qubit {\sf X}-shaped self-adjoint matrix $W=X(s,u)$ is block-positive if and only if
the inequality $\delta_n(s)\ge \|u\|_{\xx_n}$ holds.
\end{theorem}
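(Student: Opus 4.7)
The plan is to induct on $n$, with Lemma \ref{BP-induction} providing the inductive step. For the base case $n=1$, every $1$-qubit state is separable, so $W=X(s,u)$ is block-positive iff $W\geq 0$; since $s_0,s_1\geq 0$ and $u_1=\bar u_0$, this amounts to $s_0 s_1\geq |u_0|^2$. An AM--GM computation yields $\delta_1(s)=2\sqrt{s_0 s_1}$, and a direct maximization of $u_0\alpha+\bar u_0\bar\alpha$ over $\alpha\in\mathbb T$ yields $\|u\|_{\xx_1}=2|u_0|$, so the asserted inequality reduces to the same condition.

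For the inductive step from $n$ to $n+1$, let $W=X(s,u)\in M_2^{\otimes(n+1)}$ and $\phi=\phi^{\{n+1\},[n]}_W$ as in (\ref{phi-map}). The excerpt notes that nonnegative combinations of the $P_{r,\alpha}$ are dense in the positive cone of $M_2$, while block-positive matrices form a closed convex cone (cut out by closed half-spaces $\langle\varrho,\cdot\rangle\geq 0$). Combined with continuity and linearity of $\phi$, Lemma \ref{BP-induction} is therefore equivalent to requiring $\phi(P_{r,\alpha})$ to be block-positive for every $r>0,\ \alpha\in\mathbb T$. Reading off from the formulas for $\phi(|i\rangle\langle j|)$ in the paragraph preceding the theorem, one identifies
$$
\phi(P_{r,\alpha})=X(s^{(r)},u^{(\alpha)}),\qquad s^{(r)}_\bfi:=r\,s_{\bfi 0}+r^{-1}s_{\bfi 1},\quad u^{(\alpha)}_\bfi:=\alpha\,u_{\bfi 0}+\bar\alpha\,u_{\bfi 1},
$$
which is an $n$-qubit {\sf X}-shaped matrix; the relation $u^{(\alpha)}_{\bar\bfi}=\overline{u^{(\alpha)}_\bfi}$ is immediate from the defining identity $u_{\bar\bfj}=\bar u_\bfj$ in $\calv^\sa_{n+1}$, so $(s^{(r)},u^{(\alpha)})\in\calv^+_n\times\calv^\sa_n$. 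The induction hypothesis then gives: $\phi(P_{r,\alpha})$ is block-positive iff $\delta_n(s^{(r)})\geq\|u^{(\alpha)}\|_{\xx_n}$.

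Finally, splitting each index in $I_{[n+1]}$ as $\bfi' b$ with $\bfi'\in I_{[n]}$ and $b\in\{0,1\}$, and the parameters as $r=(r',r_{n+1})$ and $\alpha=(\alpha',\alpha_{n+1})$, the definitions in (\ref{A_nB_n}) factor as
$$
\delta_{n+1}(s)=\inf_{r_{n+1}>0}\delta_n(s^{(r_{n+1})}),\qquad \|u\|_{\xx_{n+1}}=\sup_{\alpha_{n+1}\in\mathbb T}\|u^{(\alpha_{n+1})}\|_{\xx_n}.
$$
Thus the joint condition ``$\delta_n(s^{(r)})\geq\|u^{(\alpha)}\|_{\xx_n}$ for all $r>0$, $\alpha\in\mathbb T$'' is exactly $\delta_{n+1}(s)\geq\|u\|_{\xx_{n+1}}$, completing the induction.

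The genuinely nontrivial piece of the argument is the closure step that upgrades ``$\phi(P_{r,\alpha})$ block-positive for every $r,\alpha$'' to ``$\phi(P)$ block-positive for every positive $P\in M_2$''; everything else is index bookkeeping. The structural insight making the induction run is that $\phi$ carries the {\sf X}-shape down from $n+1$ to $n$ when evaluated on the rank-one generators $P_{r,\alpha}$, and that the resulting slices $(s^{(r)},u^{(\alpha)})$ factor the definitions of $\delta$ and of $\|\cdot\|_{\xx_n}$ along the last qubit in matching ways.
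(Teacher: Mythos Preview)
Your proof is correct and follows essentially the same route as the paper: induction on $n$, base case via $\delta_1(s)=2\sqrt{s_0s_1}$ and $\|u\|_{\xx_1}=2|u_0|$, and inductive step by computing $\phi(P_{r,\alpha})=X(s^{(r)},u^{(\alpha)})$ and factoring $\delta_{n+1}$ and $\|\cdot\|_{\xx_{n+1}}$ along the last qubit. The only differences are expository: you spell out the density/closure reasoning needed to reduce Lemma~\ref{BP-induction} to the generators $P_{r,\alpha}$ and verify $u^{(\alpha)}\in\calv_n^\sa$, both of which the paper leaves implicit.
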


\begin{proof}
When $n=1$ with ${I}_{[1]}=\{0,1\}$, we have
$$
\begin{aligned}
\delta_1(s)&=\inf_{r_1\in\mathbb R_+}(s_0r_1+s_1r_1^{-1})=2\sqrt{s_0s_1},\\
\|u\|_{\xx_1}&=\sup_{\alpha_1\in\mathbb T}(u_0\alpha_1+u_1\alpha_1^{-1})=\sup_{{\alpha_1}\in\mathbb T}(u_0\alpha_1+\bar u_0\bar\alpha_1)=2|u_0|
=\|u\|_1,
\end{aligned}
$$
which show that the inequality $\delta_1(s)\ge \|u\|_{\xx_1}$ holds if and only if $W$ is positive.

In order to use finite induction, suppose that the statement holds for the $n$-qubit case.
For an $(n+1)$-qubit {\sf X}-shaped self-adjoint matrix $W$, we see that the map $\phi$ in (\ref{phi-map})
sends $P_{r_{n+1},\alpha_{n+1}}$ to the $n$-qubit {\sf X}-shaped matrix
whose ${\bf i}$-th diagonal and anti-diagonal entries are given by
$$
r_{n+1}s_{{\bf i}0}+r_{n+1}^{-1}s_{{\bf i}1}\qquad{\text{\rm and}}\qquad
\alpha_{n+1}u_{{\bf i}0}+\alpha_{n+1}^{-1}u_{{\bf i}1},
$$
respectively. Furthermore, we have
$$
\begin{aligned}
\inf_{r_{n+1}\in\mathbb R_+} \inf_{r\in\mathbb R_+^n} \sum_{{\bf i}\in{I}_{[n]}} (r_{n+1}s_{{\bf i}0}+r_{n+1}^{-1}s_{{\bf i}1})r^{\bf i}&=
\inf_{r_{n+1}\in\mathbb R_+} \inf_{r\in\mathbb R_+^n} \sum_{{\bf i}\in{I}_{[n]}} (s_{{\bf i}0}r^{\bf i}r_{n+1}+s_{{\bf i}1}r^{\bf i}r_{n+1}^{-1})\\
&=\inf_{r\in\mathbb R_+^{n+1}} \sum_{{\bf j}\in{I}_{[n+1]}}s_{\bf j}r^{\bf j}
=\delta_{n+1}(s),
\end{aligned}
$$
and
$$
\sup_{\alpha_{n+1}\in\mathbb T} \sup_{\alpha\in\mathbb T^n}
\sum_{{\bf i}\in{I}_{[n]}}(\alpha_{n+1}u_{{\bf i}0}+\alpha_{n+1}^{-1}u_{{\bf i}1})\alpha^{\bf i}=\|u\|_{\xx_{n+1}},
$$
similarly. Therefore, we see that $W$ is block-positive if and only
if $\phi(P_{r_{n+1},\alpha_{n+1}})\in M_2^{\otimes n}$ is
block-positive for every $r_{n+1}\in \mathbb R_+$ and
$\alpha_{n+1}\in\mathbb T$ by Lemma \ref{BP-induction} if and only
if $\delta_{n+1}(s)\ge \|u\|_{\xx_{n+1}}$ by the induction hypothesis.
\end{proof}

In the case of $n=2$, we have
$$
\begin{aligned}
\delta_2(s)&=\inf_{r_1,r_2\in\mathbb R_+}(s_{00}r_1r_2+s_{01}r_1r_2^{-1}+s_{10}r_1^{-1}r_2+s_{11}r_1^{-1}r_2^{-1})\\
&=\inf_{r_2\in\mathbb R_+}\left[\inf_{r_1\in\mathbb R_+}[(s_{00}r_2+s_{01}r_2^{-1})r_1+(s_{10}r_2+s_{11}r_2^{-1})r_1^{-1}]\right]\\
&=\inf_{r_2\in\mathbb R_+}2\sqrt{(s_{00}r_2+s_{01}r_2^{-1})(s_{10}r_2+s_{11}r_2^{-1})}\\
&=\inf_{r_2\in\mathbb R_+}2\sqrt{s_{00}s_{10}r_2^2+s_{01}s_{11}r_2^{-2}+s_{00}s_{11}+s_{01}s_{10}}\\
&=2\sqrt{2\sqrt{s_{00}s_{01}s_{10}s_{11}}+s_{00}s_{11}+s_{01}s_{10}}
=2(\sqrt{s_{00}s_{11}}+\sqrt{s_{01}s_{10}}\,),
\end{aligned}
$$
and
$$
\begin{aligned}
\|u\|_{\xx_2}&=\sup_{\alpha_1,\alpha_2\in\mathbb T}(u_{00}\alpha_1\alpha_2+u_{01}\alpha_1\bar\alpha_2
   +u_{10}\bar\alpha_1\alpha_2+u_{11}\bar\alpha_1\bar\alpha_2)\\
&=\sup_{\alpha_1,\alpha_2\in\mathbb T}(u_{00}\alpha_1\alpha_2+u_{01}\alpha_1\bar\alpha_2
   +\bar u_{01}\bar\alpha_1\alpha_2+\bar u_{00}\bar\alpha_1\bar\alpha_2)\\
&=2\sup_{\alpha_1,\alpha_2\in\mathbb T} \left(\re (u_{00}\alpha_1\alpha_2+u_{01}\alpha_1\bar\alpha_2)\right)
=2(|u_{00}|+|u_{01}|)=\|u\|_1,
\end{aligned}
$$
because we can take $\alpha_1,\alpha_2\in\mathbb T$ so that the last equality holds. Alternatively, we have
$$
\begin{aligned}
\|u\|_{\xx_2}
&=2\sup_{\alpha_2\in\mathbb T} \left[\sup_{\alpha_1\in\mathbb T} \left(\re (u_{00}\alpha_2+u_{01}\bar\alpha_2)\alpha_1\right)\right]\\
&=2\sup_{\alpha_2\in\mathbb T} |u_{00}\alpha_2+u_{01}\bar\alpha_2|
=2(|u_{00}|+|u_{01}|)=\|u\|_1.
\end{aligned}
$$

We proceed to find inductive formulae for $\delta_n(s)$ and $\|u\|_{\xx_n}$. For a function $s\in\calv_{n+1}^{\mathbb R}$
defined on ${I}_{[n+1]}={I}_{[n]}\times \{0,1\}$
and a real number $r\in\mathbb R_+$, we define $s[r]\in\calvnR$ by
\begin{equation}\label{ind_I_2}
s[r]_{\bf i}=s_{{\bf i}0}r+s_{{\bf i}1}r^{-1},\qquad {\bf i}\in{I}_{[n]}.
\end{equation}
By the relation
$$
\begin{aligned}
\sum_{{\bf i}\in{I}_{[n+1]}} s_{\bf i}\cdot (r_1,\dots,r_n,r_{n+1})^{\bf i}
&=\sum_{{\bf j}\in{I}_{[n]}}(s_{{\bf j}0}r_{n+1}+s_{{\bf j}1}r_{n+1}^{-1}) \cdot (r_1,\dots,r_n)^{\bf j}\\
&=\sum_{{\bf j}\in{I}_{[n]}}s[r_{n+1}]_{\bf j} \cdot (r_1,\dots,r_n)^{\bf j},
\end{aligned}
$$
we have
\begin{equation}\label{ind_A_n}
\begin{aligned}
\delta_{n+1}(s)
=\inf_{r_{n+1}\in\mathbb R_+}\left[\inf_{(r_1,\dots,r_n)\in\mathbb R^n_+}
   \sum_{{\bf i}\in{I}_{[n]}}s[r_{n+1}]_{\bf i}\cdot (r_1,\dots,r_n)^{\bf i}\right]
=\inf_{r\in\mathbb R_+} \delta_n(s[r]).
\end{aligned}
\end{equation}
In the case of $n=3$, we have
$$
\begin{aligned}
\delta_3(s)
&=\inf_{r\in\mathbb R_+}\delta_2(s[r])\\
&=2\inf_{r\in\mathbb R_+}\left[\sqrt{s[r]_{00}s[r]_{11}}+\sqrt{s[r]_{01}s[r]_{10}}\right]\\
&=2\inf_{r\in\mathbb R_+}\left[\sqrt{(s_{000}r+s_{001}r^{-1})(s_{110}r+s_{111}r^{-1})}\right. \\
&\phantom{xxxxxxxxxxxxxxxxxxxxx}    +\left.\sqrt{(s_{010}r+s_{011}r^{-1})(s_{100}r+s_{101}r^{-1})}\right],
\end{aligned}
$$
for $s\in\calv_3^+$.
We can also take the first-third or second-third indices for ${\bf i}$ in  (\ref{ind_I_2}), to get
$$
\begin{aligned}
\delta_3(s)
&=2\inf_{r\in\mathbb R_+}\left[\sqrt{(s_{000}r+s_{010}r^{-1})(s_{101}r+s_{111}r^{-1})}\right.\\
&\phantom{xxxxxxxxxxxxxxxxxxxxx}    +\left.\sqrt{(s_{001}r+s_{011}r^{-1})(s_{100}r+s_{110}r^{-1})}\right]\\
&=2\inf_{r\in\mathbb R_+}\left[\sqrt{(s_{000}r+s_{100}r^{-1})(s_{011}r+s_{111}r^{-1})}\right.\\
&\phantom{xxxxxxxxxxxxxxxxxxxxx}    +\left.\sqrt{(s_{001}r+s_{101}r^{-1})(s_{010}r+s_{110}r^{-1})}\right].
\end{aligned}
$$
The last one has been considered in \cite{han_kye_tri,han_kye_GHZ}
to characterize three qubit block-positivity up to the scalar multiplication by $2$.

For given $u\in\calv_{n+1}^\sa$ and $\alpha\in\mathbb T$, we define $u[\alpha]\in\calv_n^\sa$ by
$u[\alpha]_\bfi=u_{\bfi 0}\alpha+u_{\bfi 1}\bar\alpha$ for $\bfi\in I_{[n]}$. Then we have
$$
\|u\|_{\xx_{n+1}}=\sup_{\alpha\in\mathbb T}\|u[\alpha]\|_{\xx_n}
$$
by the same reasoning. We also have
$$
\begin{aligned}
\|u\|_{\xx_3}
&=2\sup_{\alpha\in\mathbb T}\left( |u_{000}\alpha +u_{001}\bar\alpha|+|u_{010}\alpha+u_{011}\bar\alpha|\right)\\
&=2\sup_{\alpha\in\mathbb T}\left( |u_{000}\alpha +u_{001}|+|u_{010}\alpha+u_{011}|\right),
\end{aligned}
$$
and  the identities
$$
\begin{aligned}
\|u\|_{\xx_3}
&=2\sup_{\alpha\in\mathbb T}\left( |u_{000}\alpha +u_{010}|+|u_{001}\alpha+u_{011}|\right)\\
&=2\sup_{\alpha\in\mathbb T}\left( |u_{000}\alpha +u_{100}|+|u_{001}\alpha+u_{101}|\right).
\end{aligned}
$$
Motivated by the characterization of block-positivity of three qubit {\sf X}-shaped matrices in
\cite{han_kye_tri}, the half of the last number was taken as the definition of $B(u)$
in \cite{han_kye_GHZ}, and has been calculated \cite{chen_han_kye, han_kye_GHZ} in terms of entries $u_{\bf i}$ in several cases.

\begin{proposition}\label{b_n_ineq}
We have $2\|u\|_\infty\le \|u\|_{\xx_n}\le \|u\|_1$ for every $u\in\calv_n^\sa$.
\end{proposition}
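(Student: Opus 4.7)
The plan is to split off the upper and lower inequalities. The upper bound $\|u\|_{\xx_n} \le \|u\|_1$ is immediate from the triangle inequality: since each $\alpha^\bfi$ has modulus one, we have $|\sum_\bfi u_\bfi \alpha^\bfi| \le \sum_\bfi |u_\bfi| = \|u\|_1$ uniformly in $\alpha \in \mathbb T^n$, and taking the supremum gives the bound.

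For the lower bound $2\|u\|_\infty \le \|u\|_{\xx_n}$, I would induct on $n$, using the recursion $\|u\|_{\xx_{n+1}} = \sup_{\alpha \in \mathbb T} \|u[\alpha]\|_{\xx_n}$ with $u[\alpha]_\bfi = u_{\bfi 0}\alpha + u_{\bfi 1}\bar\alpha$, which was established just above the proposition. The base case $n=1$ falls out of the explicit formula $\|u\|_{\xx_1} = 2|u_0|$ computed in the proof of Theorem \ref{block-pos}, and this equals $2\|u\|_\infty$ because $u_1 = \bar u_0$ forces $|u_0| = |u_1|$. For the inductive step, one first checks that $u[\alpha] \in \calv_n^\sa$, an easy consequence of the self-adjointness relations $u_{\bar\bfi 0} = \bar u_{\bfi 1}$ and $u_{\bar\bfi 1} = \bar u_{\bfi 0}$. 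The inductive hypothesis then gives $\|u[\alpha]\|_{\xx_n} \ge 2\|u[\alpha]\|_\infty$ for every $\alpha \in \mathbb T$; substituting into the recursion and exchanging the supremum over $\alpha$ with the finite maximum over $\bfi$ yields
$$\|u\|_{\xx_{n+1}} \ge 2 \max_{\bfi \in I_{[n]}} \sup_{\alpha \in \mathbb T}|u_{\bfi 0}\alpha + u_{\bfi 1}\bar\alpha|.$$
A short one-variable optimization on the unit circle gives $\sup_{\alpha \in \mathbb T} |u_{\bfi 0}\alpha + u_{\bfi 1}\bar\alpha| = |u_{\bfi 0}| + |u_{\bfi 1}|$, whose maximum over $\bfi$ is clearly at least $\|u\|_\infty$, closing the induction.

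There is no real conceptual obstacle here: the whole argument is a routine application of the recursion together with elementary phase optimization. The only small care needed is in the legitimacy of exchanging the supremum over $\alpha \in \mathbb T$ with the maximum over the finite index set $I_{[n]}$, which is automatic for finite ranges. It is worth noting that the lower bound actually delivers the slightly stronger statement $\|u\|_{\xx_n} \ge 2 \max_\bfi(|u_{\bfi 0}| + |u_{\bfi 1}|)$ at each inductive step, though the weaker form stated in the proposition is what will be needed in the sequel.
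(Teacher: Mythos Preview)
Your proof is correct and follows essentially the same route as the paper: the upper bound via the triangle inequality, and the lower bound by induction on $n$ using the recursion $\|u\|_{\xx_{n+1}}=\sup_{\alpha}\|u[\alpha]\|_{\xx_n}$ together with the one-variable optimization $\sup_{\alpha\in\mathbb T}|u_{\bfi 0}\alpha+u_{\bfi 1}\bar\alpha|=|u_{\bfi 0}|+|u_{\bfi 1}|$. The paper's version is slightly terser (it fixes $\bfi$ first and then chooses the optimizing $\alpha$, rather than exchanging the two suprema), but the content is the same; your more careful treatment of the base case and of the membership $u[\alpha]\in\calv_n^\sa$ is a welcome addition.
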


\begin{proof}
The inequality $\|u\|_{\xx_n}\le \|u\|_1$ follows from
\begin{equation}\label{bbbbbb}
\|u\|_{\xx_n}
=\max_{\alpha\in\mathbb T^n}\sum_{{\bf i}\in{I}_{[n]}} u_{\bf i}\alpha^{\bf i}
\le \max_{\alpha\in\mathbb T^n}\sum_{{\bf i}\in{I}_{[n]}} |u_{\bf i}\alpha^{\bf i}|
=\sum_{{\bf i}\in{I}_{[n]}} |u_{\bf i}| =\|u\|_1.
\end{equation}
We will use finite induction to prove
the other inequality. Given $\bfi\in I_{[n]}$, we take $\alpha\in\mathbb T$ such that
$|u_{\bfi 0}\alpha+u_{\bfi 1}\bar\alpha|=|u_{\bfi 0}|+|u_{\bfi 1}|$. Then we have
$$
\|u\|_{\xx_{n+1}}\ge \|u[\alpha]\|_{\xx_n}
\ge 2 |u[\alpha]_\bfi|
\ge 2\max\{ |u_{\bfi 0}|,|u_{\bfi 1}|\}.
$$
We used the induction hypothesis in the second inequality. This shows $\|u\|_{\xx_{n+1}}\ge 2|u_\bfj|$ for every
$\bfj\in I_{[n+1]}$.
\end{proof}

\section{Separability of {\sf X}-shaped multi-qubit states}\label{sec-sep}

In this section, we characterize the separability of multi-qubit
{\sf X}-states $\varrho=X(a,c)$. We begin with the following
observation whose three qubit version appears in \cite{han_kye_GHZ}.

\begin{proposition}\label{x-part-sep}
The {\sf X}-part of a multi-qubit separable state is again separable.
\end{proposition}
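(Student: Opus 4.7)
The plan is to realize the X-part of an $n$-qubit state as the image of a twirl over a finite subgroup of local unitaries; since any convex combination of local-unitary conjugations preserves separability, this will yield the proposition at once.

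More concretely, for $s=(s_1,\dots,s_n)\in\{0,1\}^n$, let $Z^s:=Z^{s_1}\otimes\cdots\otimes Z^{s_n}$ with $Z=\diag(1,-1)$. Each $Z^s$ is a tensor product of single-qubit unitaries, and $Z^s=(Z^s)^*$. I would then take the even-weight index set
$$
E:=\{s\in\{0,1\}^n:\ s_1+\cdots+s_n\equiv 0\pmod 2\}
$$
(of cardinality $2^{n-1}$) and define
$$
\Psi(\varrho):=\frac{1}{2^{n-1}}\sum_{s\in E}Z^s\,\varrho\, Z^s.
$$
This is manifestly a convex combination of local-unitary conjugations of $\varrho$, so separability of $\varrho$ immediately implies separability of $\Psi(\varrho)$.

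The key verification is that $\Psi(\varrho)$ is exactly the X-part of $\varrho$. Since $Z^s\ket{\bfi}=(-1)^{s\cdot\bfi}\ket{\bfi}$, conjugation by $Z^s$ multiplies the $(\bfi,\bfj)$-entry by $(-1)^{s\cdot(\bfi\oplus\bfj)}$, where $\oplus$ denotes bitwise addition modulo $2$. Hence $\Psi$ rescales the $(\bfi,\bfj)$-entry by
$$
\frac{1}{2^{n-1}}\sum_{s\in E}(-1)^{s\cdot t}=\frac{1}{2^n}\left[\sum_{s\in\{0,1\}^n}(-1)^{s\cdot t}+\sum_{s\in\{0,1\}^n}(-1)^{s\cdot(t\oplus\bfone)}\right],\qquad t:=\bfi\oplus\bfj,
$$
which equals $1$ when $t\in\{\bfzero,\bfone\}$ and $0$ otherwise (each of the two inner sums vanishes unless its exponent vector is zero). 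Since $t\in\{\bfzero,\bfone\}$ is exactly the condition $\bfj=\bfi$ or $\bfj=\bar\bfi$, $\Psi(\varrho)$ has the same diagonal and anti-diagonal entries as $\varrho$ and zeros elsewhere, i.e.\ $\Psi(\varrho)$ is the X-part of $\varrho$.

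I do not foresee any genuine obstacle here; the only substantive step is the choice of the averaging subgroup $E$ (equivalently, recognizing that the X-shape is the annihilator subset associated to the subgroup $\{\bfzero,\bfone\}\subset(\mathbb Z/2)^n$), after which all remaining computations are elementary. As a bonus, the same identity gives a dual proof: the map $\Psi$ is self-adjoint with respect to the pairing $\langle\cdot,\cdot\rangle$, so one could alternatively argue via entanglement witnesses that $\Psi$ sends block-positive matrices to block-positive matrices and hence separable states to separable states, but the twirl argument seems cleaner.
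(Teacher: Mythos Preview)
Your proof is correct and takes a genuinely different route from the paper. The paper argues by induction on the number of qubits: it writes the {\sf X}-part of a pure product state $\omega\otimes|x\rangle\langle x|$ explicitly as $\tfrac12(\omega_X\otimes|x\rangle\langle x|+\omega_X^{-}\otimes|x_-\rangle\langle x_-|)$, where $\omega_X^{-}$ is obtained from $\omega_X$ by a local sign flip on the anti-diagonal, and invokes the induction hypothesis to conclude. Your argument instead realizes the {\sf X}-projection in one stroke as the twirl $\Psi(\varrho)=2^{-(n-1)}\sum_{s\in E}Z^s\varrho Z^s$ over the even-weight subgroup, and the character computation you give is exactly right: the rescaling factor is $1$ when $\bfi\oplus\bfj\in\{\bfzero,\bfone\}$ and $0$ otherwise.

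What each approach buys: the paper's induction is more hands-on and gives an explicit separable decomposition of the {\sf X}-part of a pure product state, which can be useful if one later wants to track the summands. Your twirl is cleaner and more conceptual; it avoids induction entirely, and, as you observe, the self-adjointness of $\Psi$ immediately yields the companion statement (the {\sf X}-part of a block-positive matrix is block-positive) without a separate argument. The only tacit assumption is $n\ge 1$ so that $\bfzero\neq\bfone$; for $n\ge 2$ (the relevant multi-qubit range) everything goes through verbatim.
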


\begin{proof}
It suffices to prove that the {\sf X}-part of an $n$-qubit pure product state is separable.
For a given $|x\ran=(x_0,x_1)^\ttt\in\mathbb C^2$, we write $|x_\pm\ran =(x_0,\pm x_1)^\ttt$. When $n=2$ and
$\varrho=|x \ran\lan x| \otimes |y \ran\lan y|$, the {\sf X}-part $\varrho_\xx$ of
$\varrho$ is given by
$$
\varrho_X = {1 \over 2}\left(|x_+ \ran\lan x_+| \otimes |y_+
\ran\lan y_+| + |x_- \ran\lan x_-| \otimes |y_- \ran\lan
y_-|\right).
$$
We will proceed by induction. Let $\omega$ be an
$(n-1)$-qubit pure product state and $\varrho=\omega \otimes |x \ran\lan
x|$. By the induction hypothesis, the {\sf X}-part $\omega_X$ of
$\omega$ is separable. Let $\omega_X^-$ be a separable state
obtained by multiplying the anti-diagonal part of
$\omega_X$ by $-1$. This is obtained by the local unitary operation with ${\rm diag}
(1,-1) \otimes (1,1) \otimes \cdots \otimes (1,1)$.

Then, $\omega_X \otimes |x \ran\lan x|$ is a block {\sf X}-state whose blocks are $2\times 2$ matrices,
and its diagonal and the anti-diagonal blocks are given by
$$
\omega_{\bfi,\bfi} \begin{pmatrix} |x_0|^2 & x_0 \bar x_1 \\ \bar
x_0 x_1 & |x_1|^2 \end{pmatrix} \qquad \text{and} \qquad
\omega_{\bfi,\bar \bfi} \begin{pmatrix} |x_0|^2 & x_0 \bar x_1 \\
\bar x_0 x_1 & |x_1|^2 \end{pmatrix},
$$
respectively. Similarly, $\omega_X^- \otimes |x_- \ran\lan x_-|$ is
also a block {\sf X}-state whose diagonal and the
anti-diagonal blocks are given by
$$
\omega_{\bfi,\bfi} \begin{pmatrix} |x_0|^2 & -x_0 \bar x_1 \\ -\bar
x_0 x_1 & |x_1|^2 \end{pmatrix} \quad \text{and} \quad
-\omega_{\bfi,\bar \bfi} \begin{pmatrix} |x_0|^2 & -x_0 \bar x_1 \\
-\bar x_0 x_1 & |x_1|^2 \end{pmatrix} =\omega_{\bfi,\bar \bfi}
\begin{pmatrix} -|x_0|^2 & x_0 \bar x_1 \\ \bar x_0 x_1 & -|x_1|^2
\end{pmatrix},
$$
respectively. Therefore, we have
$$
\varrho_X = {1 \over 2}\left(\omega_X \otimes |x \ran\lan x| +
\omega_X^- \otimes |x_- \ran\lan x_-|\right),
$$
which is separable.
\end{proof}

By the exactly same argument as in Proposition 3.1 of \cite{han_kye_GHZ}, we have the following:

\begin{corollary}\label{x-part-BP}
The {\sf X}-part of a multi-qubit block-positive matrix is again block-positive.
\end{corollary}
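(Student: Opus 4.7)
The plan is to reduce the statement directly to Proposition \ref{x-part-sep} via the Hilbert--Schmidt duality between states and witnesses. Block-positivity of $W_X$ is the condition $\langle \varrho, W_X\rangle \ge 0$ for every separable state $\varrho$, so the task is to rewrite this pairing in a form where the block-positivity of $W$ and the separability conclusion of Proposition \ref{x-part-sep} can be combined.

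The first step is the elementary observation that the map $M \mapsto M_X$ (keeping only diagonal and anti-diagonal entries, zeroing out the rest) is self-adjoint with respect to the trace inner product on $M_2^{\otimes n}$: it is the orthogonal projection onto the subspace of X-shaped matrices. Concretely, writing $\langle A, B\rangle = \tr(A^* B)$ and noting that this pairing picks out entry-by-entry contributions, one sees immediately that
\[
\langle \varrho, W_X\rangle = \langle \varrho_X, W\rangle,
\]
because the only entries of $W_X$ that contribute to the sum are exactly those on the diagonal and anti-diagonal of $\varrho$, which are precisely the entries retained by $\varrho_X$.

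The second step is to fix an arbitrary separable state $\varrho$ and apply Proposition \ref{x-part-sep} to conclude that $\varrho_X$ is again separable. The third step combines this with the hypothesis that $W$ is block-positive: by definition, $\langle \varrho_X, W\rangle \ge 0$. Chaining the identity from the first step yields $\langle \varrho, W_X\rangle \ge 0$ for every separable $\varrho$, which is the desired block-positivity of $W_X$. The argument is parallel to Proposition 3.1 in \cite{han_kye_GHZ}; only Proposition \ref{x-part-sep} of the present paper, which is the multi-qubit generalization, is needed to make it go through.

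There is essentially no obstacle in this proof: the entire content lies in the separability statement for $\varrho_X$ already established, together with the self-duality of the X-projection. The only thing to be mildly careful about is the direction of the duality identity, making sure that zero off-X entries on one side correctly eliminate the corresponding off-X entries on the other; this is immediate from the definition of the trace pairing applied entrywise.
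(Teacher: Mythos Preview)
Your argument is correct and is exactly the duality argument the paper has in mind when it invokes Proposition~3.1 of \cite{han_kye_GHZ}: the $\xx$-projection is self-adjoint for the trace pairing, so $\lan\varrho,W_\xx\ran=\lan\varrho_\xx,W\ran$, and then Proposition~\ref{x-part-sep} together with block-positivity of $W$ finishes it. There is nothing to add.
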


\begin{proposition}\label{sep-equi}
For an $n$-qubit state $\varrho=X(a,c)$ with $a\in \calv_n^+$ and $c\in \calv_n^\sa$, the following are equivalent:
\begin{enumerate}
\item[(i)]
$\varrho$ is separable;
\item[(ii)]
$\lan W,\varrho\ran\ge 0$ for every {\sf X}-shaped block-positive $n$-qubit matrix $W=X(s,u)$;
\item[(iii)]
$\delta_n(s)\ge \|u\|_{\xx_n}$ with $s\in \calv_n^+$ and $u\in \calv_n^\sa$ implies $\lan a,s\ran+\lan c,u\ran\ge 0$;
\item[(iv)]
$\delta_n(s)\ge \|u\|_{\xx_n}$ with $s\in \calv_n^+$ and $u\in \calv_n^\sa$ implies $\lan a,s\ran\ge\lan c,u\ran$;
\item[(v)]
$\delta_n(s)= \|u\|_{\xx_n}$ with $s\in \calv_n^+$ and $u\in \calv_n^\sa$ implies $\lan a,s\ran\ge\lan c,u\ran$;
\item[(vi)]
$\displaystyle{
\inf_{s\in \calv_n^+}\frac{\lan a,s\ran}{\delta_n(s)}\ge\sup_{u\in \calv_n^\sa}\frac{\lan c,u\ran}{\|u\|_{\xx_n}}}$.
\end{enumerate}
\end{proposition}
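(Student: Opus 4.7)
The plan is to run the cycle (i)$\Rightarrow$(ii)$\Rightarrow$(iii)$\Rightarrow$(iv)$\Rightarrow$(v)$\Rightarrow$(vi) and back through (vi)$\Rightarrow$(iv)$\Rightarrow$(iii)$\Rightarrow$(ii)$\Rightarrow$(i); the substance lives in the first two equivalences, while the rest reduce to routine rescaling and sign manipulations. For (i)$\Leftrightarrow$(ii) the forward direction is automatic, as an X-shaped block-positive matrix is in particular a block-positive witness. The converse is where convex duality enters: if $\varrho$ were entangled, the standard separation between the closed cone of separable states and the cone of block-positive matrices would furnish some block-positive $W\in M_2^{\otimes n}$ with $\lan W,\varrho\ran<0$; replacing $W$ by its X-part $W_\xx$ preserves block-positivity by Corollary~\ref{x-part-BP}, and because $\varrho=X(a,c)$ is itself X-shaped, the off-X entries of $W$ contribute nothing to the Hilbert--Schmidt pairing, so $\lan W_\xx,\varrho\ran=\lan W,\varrho\ran<0$, contradicting~(ii). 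For (ii)$\Leftrightarrow$(iii), an entrywise expansion of $\lan X(s,u),X(a,c)\ran$ leaves only diagonal--diagonal and anti-diagonal--anti-diagonal contributions, yielding $\lan a,s\ran+\lan c,u\ran$, while Theorem~\ref{block-pos} characterizes X-shaped block-positive $X(s,u)$ as those with $\delta_n(s)\ge\|u\|_{\xx_n}$; thus (ii) becomes verbatim~(iii).

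The equivalences among (iii)--(vi) are elementary. (iii)$\Leftrightarrow$(iv) follows by substituting $u\mapsto -u$, under which $\calv_n^\sa$ and $\|\cdot\|_{\xx_n}$ are invariant while $\lan c,u\ran$ flips sign. (iv)$\Rightarrow$(v) is immediate; for (v)$\Rightarrow$(iv), given $(s,u)$ with $\delta_n(s)\ge\|u\|_{\xx_n}>0$, set $\mu=\|u\|_{\xx_n}/\delta_n(s)\in(0,1]$ and apply (v) to $(\mu s,u)$, using the positive homogeneity $\delta_n(\mu s)=\mu\delta_n(s)$ to reach equality; the resulting $\mu\lan a,s\ran\ge\lan c,u\ran$, combined with $\lan a,s\ran\ge 0$ (valid since $a,s\in\calv_n^+$), gives $\lan a,s\ran\ge\lan c,u\ran$, and the degenerate cases $\delta_n(s)=0$ or $\|u\|_{\xx_n}=0$ force $u=0$ and are trivial. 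Finally (iv)$\Leftrightarrow$(vi) is pure normalization: dividing $s$ by $\delta_n(s)$ and $u$ by $\|u\|_{\xx_n}$ converts the implication form of (iv) directly into the inf-sup inequality~(vi).

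The main obstacle is the convex-separation step in (ii)$\Rightarrow$(i): producing a block-positive separator of the entangled $\varrho$ and then reducing it to X-shape without losing block-positivity. The second part rests entirely on Corollary~\ref{x-part-BP}, and the trace identity $\lan W_\xx,\varrho\ran=\lan W,\varrho\ran$ for X-shaped $\varrho$ makes the reduction lossless; everything else is bookkeeping of scales and signs.
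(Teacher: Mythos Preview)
Your proof is correct and follows essentially the same route as the paper's: duality plus Corollary~\ref{x-part-BP} for (i)$\Leftrightarrow$(ii), Theorem~\ref{block-pos} and the pairing identity $\lan X(s,u),X(a,c)\ran=\lan a,s\ran+\lan c,u\ran$ for (ii)$\Leftrightarrow$(iii), the substitution $u\mapsto -u$ for (iii)$\Leftrightarrow$(iv), and the rescaling $s\mapsto(\|u\|_{\xx_n}/\delta_n(s))\,s$ for (v)$\Rightarrow$(iv). Your treatment is slightly more explicit than the paper's in spelling out the separation argument and the degenerate cases, but there is no substantive difference.
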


\begin{proof}
The equivalence (i) $\Longleftrightarrow$ (ii) follows from the duality and Corollary \ref{x-part-BP}.
On the other hand, Theorem \ref{block-pos} tells us  that (ii) and (iii) are equivalent to each other,
because $\lan \varrho, X(s,u)\ran= \lan a,s\ran +\lan c,u\ran$.
We also get (iii) $\Longleftrightarrow$ (iv) by replacing $u$ by $-u$, and the direction (iv) $\Longrightarrow$ (v) is clear.
Now, we prove the direction (v) $\Longrightarrow$ (iv).
If $\delta_n(s)>\|u\|_{\xx_n}$ then we take $\lambda = \|u\|_{\xx_n}/\delta_n(s) \in [0,1)$ to get
$\delta_n(\lambda s)=\|u\|_{\xx_n}$. Therefore, we have $\lan a,s\ran\ge \lan a,\lambda s\ran\ge \lan c,u\ran$.
The remaining implications (v) $\Longleftrightarrow$ (vi) follows from $\delta_n({s / \delta_n(s)})=\|{u / \|u\|_{\xx_n}}\|_{\xx_n}=1$.
\end{proof}

Recall the definitions of $\Delta_n(a)$ and $\|c\|_{\xx_n}^\prime$ in (\ref{Delta_dual}). Then
the inequality in Proposition \ref{sep-equi} (vi) is nothing but
the following main result in this paper:

\begin{theorem}\label{x-sep-th}
An {\sf X}-shaped $n$-qubit state $\varrho=X(a,c)$ is separable if and only if
the inequality $\Delta_n(a)\ge \|c\|_{\xx_n}^\prime$ holds.
\end{theorem}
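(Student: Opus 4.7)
The plan is to derive the theorem as an essentially immediate corollary of Proposition \ref{sep-equi}, by showing that the two sides of the inequality in condition (vi) there are, respectively, the quantities $\Delta_n(a)$ and $\|c\|_{\xx_n}^\prime$ defined in (\ref{Delta_dual}). Both identifications rest on the same homogeneity principle: $\delta_n$ is positively homogeneous of degree one on $\calv_n^+$ (as noted after (\ref{A_nB_n})), and $\|\cdot\|_{\xx_n}$ is a norm on $\calv_n^\sa$, hence likewise positively homogeneous of degree one.

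For the left-hand side, the ratio $\langle a,s\rangle/\delta_n(s)$ is invariant under replacing $s$ by $\lambda s$ with $\lambda>0$. Restricting the infimum to $\{s\in\calv_n^+:\delta_n(s)=1\}$ by rescaling each admissible $s$ to $s/\delta_n(s)$ therefore does not change its value, and on that level set the ratio collapses to $\langle s,a\rangle$. This reproduces precisely the defining expression for $\Delta_n(a)$. An entirely parallel argument for the right-hand side — normalizing any nonzero $u\in\calv_n^\sa$ to $u/\|u\|_{\xx_n}$ — identifies $\sup_{u}\langle c,u\rangle/\|u\|_{\xx_n}$ with $\|c\|_{\xx_n}^\prime$.

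Substituting these two identifications into Proposition \ref{sep-equi}(vi) yields the theorem. There is essentially no substantive obstacle: the duality between separable states and block-positive X-witnesses (via Corollary \ref{x-part-BP}), the characterization of X-shaped block-positivity (via Theorem \ref{block-pos}), and the replacement of the inequality $\delta_n(s)\ge\|u\|_{\xx_n}$ by the scaled equality $\delta_n(s)=\|u\|_{\xx_n}$ have all been done inside Proposition \ref{sep-equi}. The only mild bookkeeping is to exclude the degenerate $s$ with $\delta_n(s)=0$ and the vector $u=0$ before normalizing; since such choices contribute $\langle a,s\rangle\ge 0$ and $\langle c,u\rangle=0$ respectively, neither affects the infimum or the supremum under consideration.
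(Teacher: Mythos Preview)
Your proposal is correct and follows exactly the paper's own approach: the paper simply remarks that the inequality in Proposition~\ref{sep-equi}(vi) ``is nothing but'' the inequality $\Delta_n(a)\ge\|c\|_{\xx_n}^\prime$, and your homogeneity/normalization argument is precisely the justification behind that remark (already implicit in the paper's proof of (v)~$\Longleftrightarrow$~(vi)).
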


By Proposition \ref{x-part-sep}, we have the following criterion for general multi-qubit states.

\begin{theorem}\label{criterion}
Let $\varrho$ be a multi-qubit state with the {\sf X}-part $X(a,c)$. If $\varrho$ is separable then
the inequality $\Delta_n(a)\ge \|c\|_{\xx_n}^\prime$ holds.
\end{theorem}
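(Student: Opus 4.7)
The plan is to derive this theorem as an immediate consequence of two facts already established in the excerpt: Proposition \ref{x-part-sep}, which states that the {\sf X}-part of a multi-qubit separable state is separable, and Theorem \ref{x-sep-th}, which characterizes separability of {\sf X}-shaped states by the inequality $\Delta_n(a)\ge \|c\|_{\xx_n}^\prime$. The first step is to observe that the passage $\varrho\mapsto X(a,c)$ preserves the diagonal and anti-diagonal entries by definition, so the vectors $a\in\calv_n^+$ and $c\in\calv_n^\sa$ attached to $\varrho$ agree with those attached to its {\sf X}-part.

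With that identification in hand, I would argue as follows. Suppose $\varrho$ is separable. Proposition \ref{x-part-sep} then guarantees that $X(a,c)$, which is the {\sf X}-part of $\varrho$, is itself a separable $n$-qubit state. Since $X(a,c)$ is {\sf X}-shaped and separable, Theorem \ref{x-sep-th} applies directly and yields $\Delta_n(a)\ge \|c\|_{\xx_n}^\prime$, which is exactly the conclusion we wanted.

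The main obstacle here is not in the present deduction but in the two inputs it rests on. The nontrivial content sits in Proposition \ref{x-part-sep}, proved by an induction in which the {\sf X}-part of a pure product state is exhibited as an explicit convex combination using the local unitary that flips signs on the second tensor factor, and in Theorem \ref{x-sep-th}, which itself relies on the duality between separability and block-positivity together with the characterization of {\sf X}-shaped entanglement witnesses in Theorem \ref{block-pos}. Once these two results are available, Theorem \ref{criterion} is a one-line corollary and requires no further technical ingredients.
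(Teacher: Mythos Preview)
Your proposal is correct and matches the paper's approach exactly: the paper also derives Theorem~\ref{criterion} directly from Proposition~\ref{x-part-sep} and Theorem~\ref{x-sep-th}, with no additional ingredients.
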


By Proposition \ref{b_n_ineq} and the duality, we have
\begin{equation}\label{dual-norm-ineq}
\|c\|_\infty\le \|c\|^\prime_{\xx_n}\le\frac 12 \|c\|_1,
\end{equation}
for every $c\in\calv_n^\sa$.
We also have
$$
\|c\|^\prime_{\xx_1}=\|c\|_\infty,\qquad \|c\|^\prime_{\xx_2}=\|c\|_\infty,
$$
because $\|u\|_{\xx_1}=\|u\|_1$ and $\|u\|_{\xx_2}=\|u\|_1$.

In order to estimate $\Delta_n(a)$ for $a\in\calv_n^+$, we consider $s_\lambda\in \calv^+_n$ defined by
$$
s_\lambda=\frac 12\left(\lambda\, e_{\bf i}
+\lambda^{-1} e_{\bar {\bf i}}\right)\in\calv_n^+
$$
for each $\lambda>0$,
where $e_{\bfi}\in \calv_n^{\mathbb R}$ is given by $(e_\bfi)_\bfj=1$ for $\bfi=\bfj$ and $(e_\bfi)_\bfj=0$ for $\bfi\neq\bfj$.
Then it is easy to see that $\delta_n(s_\lambda)=1$. We also have
$$
\inf_{\lambda>0}\lan a,s_\lambda\ran=\inf_{\lambda>0}\frac 12\left(\lambda a_\bfi +\lambda^{-1} a_{\bar\bfi}\right)
=\sqrt{a_\bfi a_{\bar\bfi}}.
$$
Therefore, we see that
$\Delta_n(a)\le\sqrt{a_{\bf i} a_{\bar {\bf i}}}$ for each ${\bf i}\in{I}_{[n]}$, and we have
\begin{equation}\label{delta=esi}
\Delta_n(a)\le \min\{\sqrt{a_{\bf i} a_{\bar {\bf i}}}:{\bf i}\in{I}_{[n]}\},\qquad a\in \calv_n^+,\ n=1,2,\dots
\end{equation}
When $n=1$, we have
$$
a_0s_0+a_1s_1\ge \sqrt{a_0a_1}\cdot 2\sqrt{s_0s_1}
=\sqrt{a_0a_1}\, \delta_1(s),
$$
because $\delta_1(s)=2\sqrt{s_0s_1}$.
In the case of $n=2$, we also have
$$
\begin{aligned}
s_{00}a_{00}&+s_{01}a_{01}+s_{10}a_{10}+s_{11}a_{11}\\
&\ge 2\sqrt{s_{00}s_{11}a_{00}a_{11}}+2\sqrt{s_{01}s_{10}a_{01}a_{10}}\\
&\ge\min\{\sqrt{a_{00}a_{11}},\sqrt{a_{01}a_{10}}\}\cdot 2(\sqrt{s_{00}s_{11}}+\sqrt{s_{01}s_{10}})\\
&=\min\{\sqrt{a_{00}a_{11}},\sqrt{a_{01}a_{10}}\}\cdot \delta_2(s)
\end{aligned}
$$
because $\delta_2(s)=2(\sqrt{s_{00}s_{11}}+\sqrt{s_{01}s_{10}})$.
Therefore, we have the following:
$$
\Delta_1(a)=\sqrt{a_0a_1},\qquad
\Delta_2(a)=\min\{\sqrt{a_{00}a_{11}},\sqrt{a_{01}a_{10}}\},
$$
and the inequality $\Delta_2(a)\ge \|c\|^\prime_{\xx_2}$ says that the $2$-qubit state $\varrho=X(a,c)$
is of PPT, which is equivalent to the separability in this case.

By the relation $\delta_n(s)\le\|s\|_1$, we have $\min\{a_{\bf
i}\}\delta_n(s)\le \langle s,a\rangle$, which implies
$$
\min\{a_{\bf i}:{\bf i}\in{I}_{[n]}\}\le \Delta_n(a),\qquad a\in \calv_n^+.
$$
If $a_{\bf i}=a_{\bar{\bf i}}$ for every index ${\bf i}\in I_{[n]}$, then we
have $\Delta_n(a)=\min\{a_{\bf i}:{\bf i}\in{I}_{[n]}\}$ by
(\ref{delta=esi}). Hence, we have the following:

\begin{corollary}\label{GHZ}
Suppose that the diagonal entries of an $n$-qubit state
$\varrho=X(a,c)$ satisfies $a_{\bf i}=a_{\bar{\bf i}}$ for each
${\bf i}\in{I}_{[n]}$. Then $\varrho$ is separable if and only if
$\min_{\bf i}a_{\bf i}\ge \|c\|^\prime_{\xx_n}$.
\end{corollary}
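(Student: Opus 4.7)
The plan is to reduce the corollary directly to Theorem \ref{x-sep-th}, which characterizes separability of an $\sf X$-state $\varrho=X(a,c)$ by the inequality $\Delta_n(a)\ge \|c\|^\prime_{\xx_n}$. All that remains, therefore, is to verify that under the symmetry hypothesis $a_{\bf i}=a_{\bar{\bf i}}$ the quantity $\Delta_n(a)$ coincides with $\min_{\bf i}a_{\bf i}$.

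For the identification $\Delta_n(a)=\min_{\bf i}a_{\bf i}$, I would sandwich $\Delta_n(a)$ between two bounds that have already been established in the preceding discussion. The upper bound comes from (\ref{delta=esi}), which gives
$$
\Delta_n(a)\le \min\{\sqrt{a_{\bf i}a_{\bar{\bf i}}}:{\bf i}\in I_{[n]}\};
$$
under the hypothesis $a_{\bf i}=a_{\bar{\bf i}}$ this simplifies to $\min_{\bf i}a_{\bf i}$. The lower bound is the inequality $\min_{\bf i}a_{\bf i}\le \Delta_n(a)$, which the paper has just derived from the obvious estimate $\delta_n(s)\le \|s\|_1$ applied in the defining formula for $\Delta_n$. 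Combining these two inequalities yields $\Delta_n(a)=\min_{\bf i}a_{\bf i}$, and Theorem \ref{x-sep-th} then immediately gives the stated equivalence.

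No serious obstacle is expected: both inequalities on $\Delta_n(a)$ are already in hand, and the symmetry hypothesis is exactly what is needed to make the arithmetic mean bound and the geometric mean bound coincide. The only thing to be careful about is to cite Theorem \ref{x-sep-th} rather than re-derive the separability criterion, and to note explicitly that $\sqrt{a_{\bf i}a_{\bar{\bf i}}}=a_{\bf i}$ under the assumption $a_{\bf i}=a_{\bar{\bf i}}$.
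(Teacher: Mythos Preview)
Your proposal is correct and follows essentially the same argument as the paper: the text immediately preceding the corollary establishes both the lower bound $\min_{\bf i}a_{\bf i}\le\Delta_n(a)$ (from $\delta_n(s)\le\|s\|_1$) and the upper bound from (\ref{delta=esi}), then invokes the symmetry $a_{\bf i}=a_{\bar{\bf i}}$ to force equality and applies Theorem~\ref{x-sep-th}.
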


GHZ diagonal states are typical examples of
$n$-qubit states satisfying the condition of Corollary \ref{GHZ}.
These are states which are diagonal in the $n$-qubit orthonormal
GHZ-basis \cite{bouw, GHZ} consisting of $2^n$ vectors given by
$$
|\xi_{\bf i}\rangle = \frac 1 {\sqrt2} \bigl (|{\bf i}\rangle
+(-1)^{i_1}|\bar{\bf i}\rangle \bigr), \qquad {\bf i}=i_1i_2\cdots i_n.
$$
So, every $n$-qubit GHZ diagonal state is of the form $X(a,c)$ with
$a_{\bf i}=a_{\bar {\bf i}} \ge 0$ and $c_{\bf i}=c_{\bar
{\bf i}} \in \mathbb R$ for each ${\bf i} \in I_n$. In the case of $3$-qubit GHZ diagonal states, the dual norm
$\|c\|^\prime_{\xx_n}$ has been calculated in terms of anti-diagonal entries \cite{han_kye_GHZ}, which are real numbers.

\section{Separability and multiset of indices}\label{sec-multiset}

In order to improve the inequality (\ref{delta=esi}),
we consider multisets of indices. Recall that a multiset of $n$-indices is called balanced if
the following
$$
\#\{\bfi\in T:\bfi(k)=0\}=\#\{\bfi\in T:\bfi(k)=1\}
$$
holds for every $k=1,2,\dots,n$.

\begin{proposition}\label{nece-cri}
For every $a\in \calv_n^+$, the inequality
$$
\left(\prod_{{\bf i}\in{T}}a_{\bf i}\right)^{1/\ell}\ge \Delta_n(a)
$$
holds for every balanced multiset ${T}$ of length $\ell$.
\end{proposition}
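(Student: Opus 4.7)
The plan is to produce, for each balanced multiset $T=\{\bfi_1,\dots,\bfi_\ell\}$ (listed with multiplicity, writing $m_\bfi=\#\{k:\bfi_k=\bfi\}$), a single $s\in\calv_n^+$ whose quotient $\langle s,a\rangle/\delta_n(s)$ matches the right-hand side $\left(\prod_{\bfi\in T}a_\bfi\right)^{1/\ell}$. This suffices because rescaling $s\mapsto s/\delta_n(s)$, legal by the homogeneity $\delta_n(\lambda s)=\lambda\delta_n(s)$ recorded just after (\ref{A_nB_n}), converts the definition (\ref{Delta_dual}) into the inequality $\Delta_n(a)\le \langle s,a\rangle/\delta_n(s)$ for every nonzero $s\in\calv_n^+$. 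I first dispose of the degenerate case where $a_\bfi=0$ for some $\bfi\in T$: then $\sqrt{a_\bfi a_{\bar\bfi}}=0$, so (\ref{delta=esi}) forces $\Delta_n(a)=0$ and the desired inequality reads $0\ge 0$. Henceforth I assume $a_\bfi>0$ for every $\bfi\in T$.

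The construction is to take $s_\bfi=m_\bfi/a_\bfi$ when $\bfi$ lies in the support of $T$ and $s_\bfi=0$ otherwise. The rigging makes $\langle s,a\rangle=\sum_\bfi m_\bfi=\ell$. For the denominator I apply the weighted AM-GM inequality with weights $m_\bfi/\ell$:
\[
\sum_\bfi s_\bfi r^\bfi \;=\; \sum_\bfi \frac{m_\bfi}{\ell}\cdot\frac{\ell s_\bfi}{m_\bfi}\, r^\bfi \;\ge\; \prod_\bfi \left(\frac{\ell s_\bfi}{m_\bfi}\, r^\bfi\right)^{m_\bfi/\ell} \;=\; \frac{\ell}{\left(\prod_{\bfi\in T}a_\bfi\right)^{1/\ell}}\cdot \left(\prod_{\bfi\in T}r^\bfi\right)^{1/\ell}.
\]
The balance hypothesis on $T$ is exactly $\prod_{\bfi\in T}r^\bfi\equiv 1$ on $\mathbb R_+^n$, so the last factor is identically $1$. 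Taking the infimum over $r$ produces $\delta_n(s)\ge \ell/\left(\prod_{\bfi\in T}a_\bfi\right)^{1/\ell}$, and combining this with $\langle s,a\rangle=\ell$ closes the estimate.

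The only conceptually active step is the choice $s_\bfi\propto m_\bfi/a_\bfi$; this is the unique rescaling (up to a multiplicative constant) that simultaneously collapses $\langle s,a\rangle$ to a pure multiplicity count and converts the AM-GM lower bound on $\delta_n(s)$ into a geometric mean of the $a_\bfi$. The place where I expect a first attempt to stall is recognizing that the balance of $T$ is exactly the ingredient needed to cancel the residual $r$-dependent factor in the AM-GM bound, rather than a sign or parity condition used elsewhere. It is worth noting that equality in AM-GM is never invoked, which is fortunate because the optimizing $r$ need not exist.
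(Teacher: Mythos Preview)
Your argument is correct and is essentially the paper's own proof: the paper chooses the test vector $s=\frac{1}{\ell}\bigl(\prod_{\bfi\in T}a_\bfi\bigr)^{1/\ell}\sum_{\bfi\in T}\frac{1}{a_\bfi}e_\bfi$, which is a scalar multiple of your $s_\bfi=m_\bfi/a_\bfi$, and then applies the same AM--GM bound together with the balance condition $\prod_{\bfi\in T}r^\bfi\equiv 1$. The only cosmetic difference is normalization---the paper scales $s$ so that $\delta_n(s)\ge 1$ and $\langle a,s\rangle$ equals the geometric mean directly, whereas you compute $\langle a,s\rangle=\ell$ and $\delta_n(s)\ge \ell/\bigl(\prod a_\bfi\bigr)^{1/\ell}$ separately and take the quotient; the scale invariance of $\langle a,s\rangle/\delta_n(s)$ makes these interchangeable.
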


\begin{proof}
If $a_\bfi=0$ for some $\bfi\in I_{[n]}$ then we have $\Delta_n(a)=0$ by (\ref{delta=esi}).
We consider the case when  $a_\bfi>0$ for every $\bfi\in I_{[n]}$.
For a given $a\in \calv_n^+$ and a balanced multiset ${T}$
with $\#{T}=\ell$, we define
$$
s=\frac 1{\ell}\left(\prod_{{\bf i}\in{T}}a_{\bf i}\right)^{1/\ell}
    \sum_{{\bf i}\in{T}}\frac 1{a_{\bf i}}e_{\bf i}\in \calv_n^+.
$$
Then we have
$$
\delta_n(s)=\frac 1{\ell}\left(\prod_{{\bf i}\in{T}}a_{\bf i}\right)^{1/\ell}
         \inf_{r\in \mathbb R^n_+}\left(\sum_{{\bf i}\in{T}}\frac 1{a_{\bf i}}r^{\bf i}\right)
\ge \frac 1{\ell}\left(\prod_{{\bf i}\in{T}}a_{\bf i}\right)^{1/\ell} \ell
         \left(\prod_{{\bf i}\in{T}}\frac 1{a_{\bf i}}\right)^{1/\ell}=1,
$$
and so, it follows that
$$
\Delta_n(a)\le\left\lan a,\frac s{\delta_n(s)}\right\ran
=\frac 1{\delta_n(s)}\lan a,s\ran \le \lan a,s\ran =\left(\prod_{{\bf i}\in{T}}a_{\bf i}\right)^{1/\ell},
$$
as it was desired.
\end{proof}

Proposition \ref{nece-cri} gives us nontrivial restrictions on the {\sf X}-parts $X(a,c)$ of separable multi-qubit states.
Especially, we have the restriction on the diagonal parts
$$
\left(\prod_{{\bf i}\in{T}}a_{\bf i}\right)^{1/\ell}\ge \|c\|_\infty
$$
by Theorem \ref{x-sep-th} and (\ref{dual-norm-ineq}), whenever ${T}$ is a balanced multiset of order $\ell$.

Suppose that a balanced multiset ${T}$ can be partitioned into
the multiset union of balanced multisets ${T}_1$ and
${T}_2$. Then we have the inequality
$$
\min\left\{
\left(\prod_{{\bf i}\in{T}_1}a_{\bf i}\right)^{1/{\# {T}_1}},\ \
\left(\prod_{{\bf i}\in{T}_2}a_{\bf i}\right)^{1/{\# {T}_2}} \right\}
\le \left(\prod_{{\bf i}\in{T}}a_{\bf i}\right)^{1/{\# {T}}},
$$
by taking logarithms. Therefore, we may consider only irreducible balanced multisets
when we estimate the number $\Delta_n(a)$ using Proposition \ref{nece-cri}.
Furthermore, there exist only finitely many irreducible balanced multisets of $n$-indices.
In fact, the maximum possible order of an irreducible balanced multiset is $2^{n-1}$, because
if the order of a balanced multiset $T$ exceeds $2^{n-1}$ then there exists $\bfi\in I_{[n]}$
such that both $\bfi$ and $\bar\bfi$ belong to $T$.
Therefore, we can define the number $\tilde\Delta_n(a)$ for $a\in\calv_n^+$ by (\ref{def_tilde_Delat}),
and get the following:

\begin{theorem}\label{Delat_tilte_Delat}
For every $a\in\calv_n^+$, we have
$\tilde\Delta_n(a)\ge \Delta_n(a)$.
\end{theorem}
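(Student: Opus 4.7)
The plan is to observe that Theorem \ref{Delat_tilte_Delat} is essentially an immediate repackaging of Proposition \ref{nece-cri}. First I would note that by Proposition \ref{nece-cri}, the inequality
$$
\left(\prod_{\bfi\in T}a_\bfi\right)^{1/\#T}\ \ge\ \Delta_n(a)
$$
holds for every balanced multiset $T$ of $n$-indices. In particular, it holds for every $T$ in the subclass $\calg_n\subset$ (balanced multisets) consisting of irreducible balanced multisets. Taking the minimum over $T\in\calg_n$ preserves the inequality, so
$$
\tilde\Delta_n(a)\ =\ \min\left\{\left(\prod_{\bfi\in T}a_\bfi\right)^{1/\#T}:T\in\calg_n\right\}\ \ge\ \Delta_n(a),
$$
as desired.

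The one point to verify before taking the minimum is that the set $\calg_n$ is finite, so that the minimum is actually attained. This is guaranteed by the remark in the paragraph preceding the theorem statement: any irreducible balanced multiset has order at most $2^{n-1}$ (otherwise some pair $\{\bfi,\bar\bfi\}$ would be contained in $T$, and could be split off as a balanced sub-multiset), hence $\calg_n$ is finite.

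There is no genuine obstacle in this step. The substantive inequality was already established as Proposition \ref{nece-cri} by the explicit construction of the test function $s=\tfrac{1}{\ell}\bigl(\prod_{\bfi\in T}a_\bfi\bigr)^{1/\ell}\sum_{\bfi\in T}a_\bfi^{-1}e_\bfi$ whose infimum $\delta_n(s)$ is bounded below by AM--GM; Theorem \ref{Delat_tilte_Delat} merely records that restricting to $T\in\calg_n$ still provides a valid (and computationally convenient) upper bound for $\Delta_n(a)$. It is worth remarking, although not needed for the proof itself, that the restriction from arbitrary balanced multisets to irreducible ones loses no information: the log-subadditivity argument given just after Proposition \ref{nece-cri} shows that whenever $T=T_1\sqcup T_2$ splits into balanced sub-multisets, the bound from $T$ is no sharper than the minimum of the bounds from $T_1$ and $T_2$.
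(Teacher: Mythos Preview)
Your proposal is correct and matches the paper's approach exactly: the paper does not give a separate proof of this theorem, but simply records it as the immediate consequence of Proposition~\ref{nece-cri} together with the finiteness of $\calg_n$ established in the preceding paragraph. Your write-up makes explicit precisely the reasoning the paper leaves implicit.
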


The notion of balanced multisets is also useful to get separability criteria regarding anti-diagonal entries.
To see this, we consider the separability of $n$-qubit states whose {\sf X}-part is of rank $2^{n-1}$.
For $r\in\mathbb R_+^n$ and $\alpha\in\mathbb T^n$, we define $\tilde r\in \calv_n^+$ and $\tilde\alpha\in \calv_n^\sa $ by
$$
\tilde r_{\bf i}=r^{\bf i}\quad {\text{\rm and}}\quad \tilde\alpha_{\bf i}=\alpha^{\bf i},\qquad \bfi\in I_{[n]},
$$
respectively. If $\delta_n(s)\ge \|u\|_{\xx_n}$, then we have
$$
\langle \tilde r, s \rangle = \sum_{{\bf i}\in{I}_{[n]}} r^\bfi s_\bfi \ge
\delta_n(s) \ge \|u\|_{\xx_n} \ge \sum_{{\bf i}\in{I}_{[n]}} \alpha^\bfi u_\bfi = \langle \tilde \alpha, u \rangle,
$$
so the {\sf X}-state $\varrho=X(\tilde r,\tilde\alpha)$ is separable by
Proposition \ref{sep-equi}. Furthermore,
it is of half rank $2^{n-1}$. We show that every separable {\sf X}-state of half rank is
in this form up to positive scalar multiples.
Suppose that $\varrho=X(a,c)$ is separable.  Then we have $a_{\bf
i}a_{\bar{\bf i}}\ge |c_{\bf j}|^2$ for every ${\bf i}, {\bf
j}\in{I}_{[n]}$ by the PPT condition in \cite{han_kye_optimal}. If a
separable state $\varrho=X(a,c)$ is of half rank $2^{n-1}$ then the
identity  $a_{\bf i}a_{\bar{\bf i}}= |c_{\bf j}|^2$ holds for every
${\bf i}, {\bf j}\in{I}_{[n]}$. Without loss of generality, we may
assume that $a_{\bf i}a_{\bar{\bf i}}= |c_{\bf i}|^2=1$ for every
index ${\bf i}$, to characterize the separability for multi qubit
{\sf X}-states with half rank.

\begin{theorem}\label{sep-half-rank}
Let $\varrho=X(a,c)$ be an $n$-qubit {\sf X}-state with $a_{\bf i}a_{\bar{\bf i}}= |c_{\bf i}|^2=1$ for every index ${\bf i}$.
Then the following are equivalent:
\begin{enumerate}
\item[(i)]
$\varrho$ is separable;
\item[(ii)]
there exist $r\in\mathbb R_+^n$ and $\alpha\in\mathbb T^n$ such that $a=\tilde r$ and $c=\tilde\alpha$;
\item[(iii)]
there exists a product vector $|\xi\ran$ such that $\varrho$ is the {\sf X}-part of $|\xi\ran\lan\xi|$;
\item[(iv)]
$\prod_{{\bf i}\in{T}} a_{\bf i}= \prod_{{\bf i}\in{T}} c_{\bf i}=1$ for every balanced multiset ${T}$;
\item[(v)]
$\prod_{{\bf i}\in{T}} a_{\bf i}= \prod_{{\bf i}\in{T}} c_{\bf i}=1$ for every irreducible balanced multiset ${T}$;
\item[(vi)]
$\prod_{{\bf i}\in{T}} a_{\bf i}= \prod_{{\bf i}\in{T}} c_{\bf i}=1$ for every irreducible balanced multiset ${T}$ of order four.
\end{enumerate}
\end{theorem}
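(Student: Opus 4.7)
The plan is to complete the chain
\[
(\mathrm{ii})\Leftrightarrow(\mathrm{iii}),\quad (\mathrm{ii})\Rightarrow(\mathrm{iv})\Rightarrow(\mathrm{v})\Rightarrow(\mathrm{vi}),\quad (\mathrm{ii})\Rightarrow(\mathrm{i}),\quad (\mathrm{i})\Rightarrow(\mathrm{ii}),\quad (\mathrm{vi})\Rightarrow(\mathrm{ii}),
\]
with the easy directions unpacking definitions and the real content concentrated in the last two implications.

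For the routine steps, $(\mathrm{ii})\Rightarrow(\mathrm{iii})$ follows by taking $|\xi_k\ran = \sqrt{r_k}\,|0\ran + (\bar\alpha_k/\sqrt{r_k})\,|1\ran$ and verifying that the $\bfi$-th diagonal of $\bigotimes_k |\xi_k\ran\lan\xi_k|$ is $\prod_k r_k^{1-2\bfi(k)} = r^\bfi$ and the $\bfi$-th anti-diagonal is $\prod_k \alpha_k^{1-2\bfi(k)} = \alpha^\bfi$; the converse reads $r_k$ and $\alpha_k$ off the moduli and phase quotients of $|\xi_k\ran$. The chain $(\mathrm{ii})\Rightarrow(\mathrm{iv})\Rightarrow(\mathrm{v})\Rightarrow(\mathrm{vi})$ is immediate, since balancedness forces $\prod_{\bfi\in T} r^\bfi = \prod_k r_k^{\sum_{\bfi\in T}(1-2\bfi(k))} = 1$ (and similarly for $c$), while the remaining arrows are only restrictions to smaller classes of multisets. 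Finally $(\mathrm{ii})\Rightarrow(\mathrm{i})$ is the observation recorded just before the theorem.

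The first substantive implication is $(\mathrm{i})\Rightarrow(\mathrm{ii})$, which I would prove by range analysis. A separable decomposition $\varrho = \sum_j p_j |\xi^{(j)}\ran\lan\xi^{(j)}|$ forces each product vector $|\xi^{(j)}\ran$ into the range of $\varrho$. The hypotheses $a_\bfi a_{\bar\bfi} = |c_\bfi|^2 = 1$ make each $2\times 2$ principal minor at $\{\bfi,\bar\bfi\}$ rank one, with range spanned by $(\sqrt{a_\bfi},\, \bar c_\bfi/\sqrt{a_\bfi})^\ttt$; hence the ratio of the $\bfi$- and $\bar\bfi$-components of any vector in the range equals $a_\bfi c_\bfi$. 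Writing $|\xi^{(j)}\ran = \bigotimes_k(x_{k,j}|0\ran + y_{k,j}|1\ran)$ and $\gamma_k := x_{k,j}/y_{k,j}$ turns this into the system $\gamma^\bfi = a_\bfi c_\bfi$ for every $\bfi\in I_{[n]}$. Setting $r_k := |\gamma_k|$ and $\alpha_k := \gamma_k/|\gamma_k|$ then gives $a = \tilde r$ and $c = \tilde\alpha$, which is (ii).

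For $(\mathrm{vi})\Rightarrow(\mathrm{ii})$, I would pass to logarithms and arguments. Setting $b_\bfi := \log a_\bfi$ and $\theta_\bfi := \arg c_\bfi$ (the latter modulo $2\pi$), (ii) is equivalent to $b\in\image\Theta_n$ and $\theta \in \image\Theta_n + 2\pi\mathbb Z^{I_{[n]}}$. Hypothesis (vi), together with the pair identities $b_\bfi+b_{\bar\bfi}=0$ and $\theta_\bfi+\theta_{\bar\bfi}\equiv 0 \pmod{2\pi}$ coming from $a_\bfi a_{\bar\bfi}=1$ and $c\in\calv_n^\sa$, says precisely that $b$ and $\theta$ annihilate---over $\mathbb R$ and modulo $2\pi$ respectively---every integer vector in the $\mathbb Z$-span $\Lambda$ of order-four irreducible balanced multiset indicators together with pair indicators. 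The conclusion then reduces to the lattice lemma $\Lambda = (\image\Theta_n)^\perp \cap \mathbb Z^{I_{[n]}}$. This lattice completeness is the main obstacle: for $n=3$ it is a direct $5$-dimensional check (either of $\{000,011,101,110\}$ and $\{001,010,100,111\}$ adds the missing dimension to the span of the four pair indicators), but for general $n$ it requires an explicit integer decomposition of an arbitrary balanced multiset indicator into order-four irreducibles and pairs, which is the combinatorial substance developed in the later sections on irreducible multisets and phase identities.
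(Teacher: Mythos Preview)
Your handling of the routine implications is fine, and your range-analysis proof of $(\mathrm{i})\Rightarrow(\mathrm{ii})$ is both correct and genuinely different from the paper's. The paper instead invokes its duality machinery: separability gives $\Delta_n(a)=\|c\|'_{\xx_n}=1$ via Theorem~\ref{x-sep-th} and~\eqref{delta=esi}, and then testing against the specific witnesses $s=\sum_\bfi a_{\bar\bfi}e_\bfi$ and $u=\sum_\bfi\bar c_\bfi e_\bfi$ forces equality in an AM--GM estimate, pinning down $r$ and $\alpha$. Your argument is more elementary---it needs only that the summands of a separable decomposition lie in the range, together with the rank-one structure of each $2\times 2$ block---and it avoids the $\Delta_n$, $\|\cdot\|'_{\xx_n}$ apparatus entirely. (You should note explicitly that any product vector in the range of $\varrho$ has all entries nonzero, so that the ratios $\gamma_k$ are defined; this follows at once since a single zero component would, by the fixed nonzero block ratios, force the whole vector to vanish.)

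The real gap is in $(\mathrm{vi})\Rightarrow(\mathrm{ii})$. Your reduction to the lattice identity $\Lambda=(\image\Theta_n)^\perp\cap\mathbb Z^{I_{[n]}}$ is correct in spirit, but you then defer its proof to ``later sections,'' and those sections do not deliver what you need: Section~\ref{sec-phase} only shows that the $\xi_T$ for order-four irreducible balanced $T$ span $\calv_n^\ph\ominus\image\Theta_n$ over~$\mathbb R$; the integral statement required for the phase part modulo~$2\pi$ is never established there. The paper does not defer at all---it proves $(\mathrm{vi})\Rightarrow(\mathrm{ii})$ self-containedly and constructively, setting $r_k^{\pm 2}=a_\bfi a_{\bar\bfi^{\{k\}^c}}$, using~(vi) directly to show this is independent of~$\bfi$, and then verifying $r^\bfi=a_\bfi$ by a telescoping product along the chain $\bfi=\bfi_1,\bfj_1,\bfi_2,\dots,\bfj_n=\bfi$ in which $\bfj_{k-1}=\bar\bfi_k$, so that the intermediate factors collapse via $a_{\bfj_{k-1}}a_{\bfi_k}=1$. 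The same explicit construction handles $c$, avoiding any lattice theory; this is exactly the ``explicit integer decomposition'' you were looking for, carried out directly rather than abstractly.
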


\begin{proof}
We first note that the separability of $\varrho$ implies
$$
1=\|c\|_\infty\le \|c\|^\prime_{\xx_n}\le \Delta_n(a)\le \min\{ \sqrt{a_{\bf i}a_{\bar{\bf i}}}:{\bf i}\in{I}_{[n]}\}=1,
$$
by (4), and so we have $\Delta_n(a)=\|c\|^\prime_{\xx_n}=1$. Taking
$s=\sum_{{\bf i}\in{I}_{[n]}}a_{\bar{\bf i}}e_{\bf i}\in \calv_n^+$, we have
$$
2^n=\lan a,s\ran \ge \Delta_n(a)\delta_n(s)=\delta_n(s)
=\inf_{r\in \mathbb R^n_+}\sum_{{\bf i}\in{I}_{[n]}} a_{\bar {\bf i}}r^{\bf i}
={1 \over 2} \inf_{r\in \mathbb R^n_+}\sum_{{\bf i}\in{I}_{[n]}} (a_{\bar {\bf i}}r^{\bf i} + a_{\bf i}r^{\bar \bfi})
\ge 2^n,
$$
since $a_{\bar {\bf i}}=a^{-1}_{\bf i}$ and $r^{\bar \bfi}= (r^{\bfi})^{-1}$.
Therefore, we see that the equality holds in the above inequality,
and so there exists $r\in \mathbb R^n_+$ such that $a_{\bar {\bf i}}r^{\bf i}=a_{\bf i}r^{\bar \bfi}$,
that is, $a_{\bf i}=r^{\bf i}$ for each ${\bf i}\in{I}_{[n]}$.
This tells us that $a=\tilde r$. On the other hand, we take
$u=\sum_{{\bf i}\in{I}_{[n]}}\bar c_{\bf i}e_{\bf i}\in \calv_n^\sa $ to get
$$
2^n=\lan c,u\ran\le\|c\|^\prime_{\xx_n} \|u\|_{\xx_n}=\|u\|_{\xx_n}
=\sup_{\alpha\in\mathbb T^n}\sum_{{\bf i}\in{I}_{[n]}} \bar c_{{\bf i}}\alpha^{\bf i}\le 2^n.
$$
This shows that there exists $\alpha\in\mathbb T^n$ such that $c=\tilde\alpha$ by the same argument, and we have the equivalence
(i) $\Longleftrightarrow$ (ii).

For $z\in\mathbb C^n$, we denote by $z^2$ the vector in $\mathbb C^n$ whose $k$-th entry is given by $z_k^2$.
For the direction (ii) $\Longrightarrow$ (iii), we
take $s\in\mathbb R^n_+$ and $\beta\in\mathbb T^n$ such that $s^2=r$ and $\beta^2=\alpha$, and consider the vector
\begin{equation}\label{vhftmnkmk}
|\xi\ran
=(s_1\beta_1, s_1^{-1}\bar\beta_1)^\ttt\otimes \dots\otimes (s_n\beta_n, s_n^{-1}\bar\beta_n)^\ttt.
\end{equation}
We note that the ${\bf i}$-th entry of $|\xi\ran$ is $s^{\bf i}\beta^{\bf i}$, and so we see that
the {\sf X}-part of $|\xi\ran\lan\xi|$ is just $X(\tilde r,\tilde\alpha)$. For the direction (iii) $\Longrightarrow$ (ii),
we note that every product vector $|\xi\ran$ with nonzero entries can be expressed by (\ref{vhftmnkmk}) up to scalar multiplications.
The implication (ii) $\Longrightarrow$ (iv) follows from the definition of balanced multisets,
and the implications  (iv) $\Longrightarrow$ (v)  $\Longrightarrow$ (vi) are trivial.

It remains to show the direction (vi) $\Longrightarrow$ (ii).
To do this, we first show that the system
\begin{equation}\label{equa-sep}
r^{\bf i}=a_{\bf i},\qquad {\bf i}\in{I}_{[n]}
\end{equation}
of equations with unknowns $r_1, r_2,\dots, r_n$ can be solved.
If we choose $\bfi\in I_{[n]}$ and $k \in [n]$ and
we multiply two equations $r^\bfi=a_\bfi$ and $r^{\bar \bfi^{\{k\}^c}}=a_{\bar \bfi^{\{k\}^c}}$,
then all but $r_k$ are canceled. So, we get a candidate for the solution as
$$
r_k =
\begin{cases}
\sqrt{a_\bfi a_{\bar \bfi^{\{k\}^c}}}, & \qquad \bfi(k)=0, \\
1 \slash \sqrt{a_\bfi a_{\bar \bfi^{\{k\}^c}}}, & \qquad \bfi(k)=1.
\end{cases}
$$
We show that  $r=(r_1,\dots,r_n)$ is independent of the choice of $\bfi \in I_{[n]}$.
Let $\bfi, \bfj \in I_{[n]}$.
If $\bfi(k)=\bfj(k)$, then the multiset $\{{\bf i}, \bar{\bf i}^{\{k\}^c}, \bar{\bf j}, \bar{\bf j}^{\{k\}}\}$
is balanced, so we have $a_{\bf i}a_{\bar{\bf i}^{\{k\}^c}}=a_{\bf j}a_{\bar{\bf j}^{\{k\}^c}}$ by (vi).
If $\bfi(k) \ne \bfj(k)$, then the multiset $\{{\bf i}, \bar{\bf i}^{\{k\}^c}, \bfj, \bar{\bf j}^{\{k\}^c}\}$
is balanced, so we also have $a_{\bf i}a_{\bar{\bf i}^{\{k\}^c}}
=1 \slash a_{\bf j}a_{\bar{\bf j}^{\{k\}^c}}$.
It remains to show that $r=(r_1,\dots,r_n)$ is a solution of (\ref{equa-sep}).
For a given index ${\bf i}=i_1 i_2 \cdots i_n$, we consider indices
$$
\bfi_k = \bar i_1 \bar i_2 \cdots \bar i_{k-1} i_k \cdots i_n, \qquad \bfj_k = i_1 i_2 \cdots i_k \bar i_{k+1} \cdots \bar i_n.
$$
Then, they satisfy
$\bfi_k(k)=i_k$, ${\bf j}_k=\bar{\bf i}_k^{\{k\}^c}$, $\bar{\bf j}_{k-1}={\bf i}_{k}$ and $\bfi_1=\bfi=\bfj_n$.
It follows that
$$
\begin{aligned}
(r^{\bf i})^2
&= (r_1^2)^{1-2i_1} (r_2^2)^{1-2i_2} \dots (r_n^2)^{1-2i_n} \\
&=(a_{{\bf i}_1}a_{{\bf j}_1})(a_{{\bf i}_2}a_{{\bf j}_2})\dots (a_{{\bf i}_n}a_{{\bf j}_n})\\
&=a_{{\bf i}_1}(a_{{\bf j}_1}a_{{\bf i}_2})(a_{{\bf j}_2}a_{{\bf i}_3})\dots (a_{{\bf j}_{n-1}}a_{{\bf i}_n})a_{{\bf j}_n}
=a_{{\bf i}_1}a_{{\bf j}_n}=(a_{\bf i})^2,
\end{aligned}
$$
as it was required. The equation $\alpha^{\bf i}=c_{\bf i}$ can be solved similarly.
\end{proof}

The argument above shows that the system (\ref{equa-sep}) of equations with unknowns $r_k$'s
has a unique solution
whenever the condition in (vi) is satisfied.
In the three qubit case, it is easy to see that there are only two irreducible balanced multisets of
order greater than or equal to $4$:
$\{000,011,101,110\}$ and $\{111,100,010,001\}$. Therefore, a three qubit {\sf X}-state
$\varrho=X(a,c)$ of rank four with
$a_{\bf i}a_{\bar{\bf i}}=|c_{\bf i}|^2=1$ ($\bfi\in I_{[3]}$) is separable if and only if
the identities
$$
a_{000}a_{011}a_{101}a_{110}=1,\qquad c_{000}c_{011}c_{101}c_{110}=1
$$
hold. This recovers a result in \cite{han_kye_phase}.
We also have
$$
\begin{aligned}
\tilde\Delta_3(a)
=\min\{\sqrt{a_{000}a_{111}},
\sqrt{a_{001}a_{110}},
&\sqrt{a_{010}a_{101}},
\sqrt{a_{011}a_{100}},\\
&\sqrt[4]{a_{000}a_{011}a_{101}a_{110}},
\sqrt[4]{a_{111}a_{100}a_{010}a_{001}}\}.
\end{aligned}
$$
This number  appears in the G\"uhne's separability criterion \cite{guhne_pla_2011}.
We show that the equality $\Delta_3(a)=\tilde\Delta_3(a)$ holds for the three qubit case,
from which we recover the main result in \cite{chen_han_kye}.
It would be nice to know if the identity $\Delta_n(a)=\tilde\Delta_n(a)$ holds for $n\ge 4$.

\begin{proposition}
We have the identity $\Delta_3(a)=\tilde\Delta_3(a)$ for every $a\in\calv_3^+$.
\end{proposition}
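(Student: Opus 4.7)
The plan is to combine Theorem~\ref{Delat_tilte_Delat}, which already gives $\tilde\Delta_3(a)\ge\Delta_3(a)$, with the reverse inequality. For the reverse I would invoke Farkas' lemma and reduce the problem to a finite combinatorial fact about the ``balanced weight cone'' on $I_{[3]}$.

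The degenerate case is easy: if $a_\bfi=0$ for some $\bfi\in I_{[3]}$, then the pair multiset $\{\bfi,\bar\bfi\}$ forces $\tilde\Delta_3(a)=0$, and (\ref{delta=esi}) gives $\Delta_3(a)=0$ as well. So I may assume $a_\bfi>0$ for every $\bfi$, and set $M=\tilde\Delta_3(a)>0$. Then it suffices to produce a single $r\in\mathbb R_+^3$ satisfying $r^\bfi\le a_\bfi/M$ for all eight indices $\bfi\in I_{[3]}$: once such an $r$ is in hand, $\langle a,s\rangle/M\ge \langle\tilde r,s\rangle\ge\delta_3(s)$ holds for every $s\in\calv_3^+$, and taking the infimum over $s$ with $\delta_3(s)=1$ yields $\Delta_3(a)\ge M=\tilde\Delta_3(a)$.

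Setting $y_k=\log r_k$, producing such an $r$ amounts to solving the linear system
$$
\sum_{k=1}^3 (1-2\bfi(k))\,y_k\le \log(a_\bfi/M),\qquad \bfi\in I_{[3]}.
$$
By Farkas' lemma this system is solvable if and only if for every $\lambda\in\mathbb R_{\ge 0}^8$ satisfying $\sum_\bfi \lambda_\bfi(1-2\bfi(k))=0$ for $k=1,2,3$ one has $\sum_\bfi\lambda_\bfi\log(a_\bfi/M)\ge 0$. Such $\lambda$'s form a polyhedral cone $C$, and it is enough to verify the inequality on the extreme rays of $C$.

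The main obstacle, and the step where the specific structure of $n=3$ enters, is identifying the extreme rays of $C$ as the characteristic vectors of the six irreducible balanced multisets listed after Theorem~\ref{sep-half-rank}: the four pairs $\{\bfi,\bar\bfi\}$ and the two four-element multisets $\{000,011,101,110\}$ and $\{001,010,100,111\}$. I would first check that each of these six vectors is extreme in $C$ by a direct linear-algebraic argument on its support; then argue that every rational $\lambda\in C$ scales to the characteristic vector of a balanced multiset and so decomposes into irreducible pieces, whence a closure/density argument shows that the nonnegative hull of the six vectors exhausts $C$. Given this, the Farkas condition collapses to $\prod_{\bfi\in T}a_\bfi\ge M^{\#T}$ for each of the six $T$'s, which is exactly the defining inequality of $\tilde\Delta_3(a)=M$. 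The same strategy applied to $n\ge 4$ would reduce $\Delta_n=\tilde\Delta_n$ to an analogous extreme-ray identification, and the obstruction in that regime is precisely that the corresponding cone can acquire extreme rays which are not characteristic vectors, which explains why the identity remains open for larger $n$.
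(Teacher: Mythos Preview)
Your argument is correct and takes a genuinely different route from the paper. Both proofs ultimately manufacture an $r\in\mathbb R_+^3$ with $r^{\bfi}\le a_\bfi/M$ for every $\bfi$ (after normalising $M=\tilde\Delta_3(a)$): the paper does this by an explicit hand construction---choosing $b_{011}$ in a suitable interval, then $b_{000},b_{001},b_{010}$, and invoking Theorem~\ref{sep-half-rank} to recognise $b=\tilde r$---and then routes the conclusion through separability of $X(b,\mathbf 1)$ and Theorem~\ref{x-sep-th}. You instead obtain $r$ non-constructively via Farkas' lemma and read off $\Delta_3(a)\ge M$ directly from the definition of $\delta_3$, never touching the anti-diagonal or the separability machinery. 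Your path is more systematic; the paper's is more concrete.

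There is one point you should revisit, because you have actually proved more than you claim. Your sentence ``every rational $\lambda\in C$ scales to the characteristic vector of a balanced multiset and so decomposes into irreducible pieces'' already establishes the Farkas alternative in full: for each irreducible piece $T_j$ the defining inequality $(\prod_{\bfi\in T_j}a_\bfi)^{1/\#T_j}\ge M$ gives $\sum_{\bfi\in T_j}\log(a_\bfi/M)\ge 0$, summing over $j$ handles every rational $\lambda\in C$, and density plus continuity handles the rest. The extreme-ray identification is redundant, and nothing in this step uses $n=3$. Consequently your closing remark about the obstruction for $n\ge 4$ is mistaken: there is no obstruction, and the very argument you wrote yields $\Delta_n(a)=\tilde\Delta_n(a)$ for every $n$---a question the paper explicitly leaves open immediately after this proposition. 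You should verify this carefully and state it as a stand-alone result.
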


\begin{proof}
In order to prove $\tilde\Delta_3(a)\le \Delta_3(a)$, it suffices to show that $\tilde\Delta_3(a)=1$ implies
$\Delta_3(a)\ge 1$. Note that $\tilde\Delta_3(a)=1$ implies
$$
\max\{ a_{000}^{-1}a_{110}^{-1}a_{101}^{-1},\ a_{100}^{-1}\}
\le
\min\{ a_{111}a_{001}a_{010},\ a_{011}\}.
$$
Take $b_{011}$ between these two numbers. Because all the three intervals
$[a_{000}^{-1},a_{111}]$, $[a_{110}^{-1},a_{001}]$ and $[a_{101}^{-1},a_{010}]$ are nonempty, we can take
$b_{000}, b_{001},b_{010}$ so that
$$
\begin{aligned}
&b_{000}^{-1}b_{001}b_{010}=b_{011},\\
&a_{000}^{-1}\le b_{000}^{-1}\le a_{111},\quad
a_{110}^{-1}\le b_{001}\le a_{001},\quad
a_{101}^{-1}\le b_{010}\le a_{010}.
\end{aligned}
$$
Take $b\in\calv_3^+$ so that $b_\bfi b_{\bar\bfi}=1$ for each $\bfi\in I_{[3]}$. Then we see that
$X(a,\bf 1)$ is the sum of $X(b,\bf 1)$ and a diagonal state, with ${\bf 1}=(1,1,\dots, 1)$.
By Theorem \ref{sep-half-rank}, we see that $X(b,\bf 1)$ is separable, and so is $X(a,\bf 1)$. Since $\|{\bf 1}\|_{{\xx}_3}^\prime=1$,
we have $\Delta_3(a)\ge 1$ by Theorem \ref{x-sep-th}.
\end{proof}

Now, we look for balanced multisets. To do this, we first note that
the following are equivalent:
\begin{itemize}
\item
The multiset $\{\bfi_1, \cdots \bfi_m, \bar \bfj_1, \cdots \bar \bfj_m \}$ is balanced,
\item
its {\sl conjugate} $\{\bar \bfi_1, \cdots \bar \bfi_m, \bfj_1, \cdots \bfj_m \}$ is balanced,
\item
$\bfi_1(k) + \bfi_2(k) + \cdots + \bfi_m(k) = \bfj_1(k) + \bfj_2(k) + \cdots + \bfj_m(k)$ for every $k=1,2,\dots,n$.
\end{itemize}
We use the notation
\begin{equation}\label{notation_balanced}
\bfi_1 + \bfi_2 + \cdots + \bfi_m \equiv \bfj_1 + \bfj_2 + \cdots + \bfj_m
\end{equation}
whenever the last condition holds.
If we identify an index ${\bf i}=i_1i_2\dots i_n$ with the natural number
${\bf i}=\sum_{k=1}^n i_k2^{n-k}$ using binary expansion, then we see that the relation (\ref{notation_balanced})
implies the identity $\bfi_1 + \bfi_2 + \cdots + \bfi_m = \bfj_1 + \bfj_2 + \cdots + \bfj_m$ as natural numbers.
Note that  the converse does not hold. In the three qubit case,
the relation $000+011 \equiv 001+010$, or equivalently $0+3\equiv 1+2$
represents two balanced multisets $\{000,011,110,101\}$ and $\{111,100,001,010\}$. It is easily checked that they are irreducible,
and there is no more irreducible balanced multiset of order four. Note that we may assume that all the indices $\bfi_k$ and $\bfj_k$
begin with $0$, when we look for multisets of the form
$\{\bfi_1, \cdots \bfi_m, \bar \bfj_1, \cdots \bar \bfj_m \}$.

Irreducible balanced multisets of order four in the four qubit system can be expressed by the identities
\begin{equation}\label{4qubit4order}
\begin{aligned}
0+7 \equiv 1+6 \equiv 2+5 \equiv 3+4, \\
0+3 \equiv 1+2,\quad 4+7 \equiv 5+6,\\
0+5 \equiv 1+4,\quad 2+7 \equiv 3+6,\\
0+6 \equiv 2+4,\quad 1+7 \equiv 3+5.
\end{aligned}
\end{equation}
One may check that these are all possible identities, and so we have $(6+6)\times 2=24$ irreducible balanced multisets
of order four. By a simple combinatorial method, one may also check that there are $8\times 2=16$ irreducible balanced multisets of order six,
which can be
expressed by the identities
$$
\begin{aligned}
0+0+7 \equiv 1+2+4, \qquad& 1+1+6 \equiv 0+3+5, \\
2+2+5 \equiv 0+3+6, \qquad& 3+3+4 \equiv 1+2+7, \\
3+4+4 \equiv 0+5+6, \qquad& 2+5+5 \equiv 1+4+7, \\
1+6+6 \equiv 2+4+7, \qquad& 0+7+7 \equiv 3+5+6.
\end{aligned}
$$
The relation
$0+0+7 \equiv 1+2+4$, or equivalently $0000+0000+0111 \equiv 0001+0010+0100$
represents two irreducible balanced multisets
$$
\{0000,0000,0111,1110,1101,1011\}, \qquad
\{1111,1111,1000,0001,0010,0100\}.
$$
It is not difficult to show that there is no irreducible balanced multiset of order eight.
Consequently, we found all the irreducible balanced multisets in the four qubit case, which gives rise to
the number $\tilde\Delta_4(a)$ for $a\in\calv_4^+$. This number is given by the minimum of $8+24+16=48$ numbers.

In general, we can find all irreducible balanced multisets of order four in the $n$-qubit system, in a recursive way.
For an index ${\bf i}=i_1i_2\ldots i_{n+1} \in I_{[n+1]}$, we define the index $\tilde {\bf i}=i_2\ldots i_{n+1} \in I_{[n]}$
by deleting the leftmost bit of ${\bf i}$, and denote $\mathcal T_{n,m}$ be the family of all irreducible balanced multisets
of order $m$ in the $n$-qubit system.
If $T=\{{\bf i}_1, {\bf i}_2, {\bf i}_3, {\bf i}_4\}$ is in $\mathcal T_{n+1, 4}$,
then the balanced multiset $\{\tilde{\bf i}_1, \tilde{\bf i}_2, \tilde{\bf i}_3, \tilde{\bf i}_4\}$  in $n$-qubit system is
one of the following:
\begin{itemize}
\item
an irreducible balanced multiset of order four,
\item
a disjoint union of two irreducible balanced multisets of order 2.
\end{itemize}

Now, we consider the reverse direction.
For a given multiset $\{{\bf i}_1, {\bf i}_2, {\bf i}_3, {\bf i}_4 \}$ in $\mathcal T_{n,4}$,
we can construct the following six multisets in $\mathcal T_{n+1,4}$:
\[
\begin{aligned}
\{0{\bf i}_1, 0{\bf i}_2, 1{\bf i}_3, 1{\bf i}_4\},\quad
\{0{\bf i}_1, 1{\bf i}_2, 0{\bf i}_3, 1{\bf i}_4\},\quad
\{0{\bf i}_1, 1{\bf i}_2, 1{\bf i}_3, 0{\bf i}_4\},\\
\{1{\bf i}_1, 1{\bf i}_2, 0{\bf i}_3, 0{\bf i}_4\},\quad
\{1{\bf i}_1, 0{\bf i}_2, 1{\bf i}_3, 0{\bf i}_4\},\quad
\{1{\bf i}_1, 0{\bf i}_2, 0{\bf i}_3, 1{\bf i}_4\}.
\end{aligned}
\]
For any given two disjoint multisets $\{{\bf i},\bar{\bf i}\}$ and $\{{\bf j},\bar{\bf j}\}$ in $\mathcal T_{n,2}$,
we also obtain two multisets
\[
\{0{\bf i}, 0\bar{\bf i}, 1{\bf j}, 1\bar{\bf j}\},\quad \{1{\bf i}, 1\bar{\bf i}, 0{\bf j}, 0\bar{\bf j}\}
\]
in $\mathcal T_{n+1,4}$. So, we can construct $2 \cdot {{2^{n-1}}\choose{2}}=2^{n-1}(2^{n-1}-1)$
multisets in ${\calt_{n+1,4}}$ from $2^{n-1}$ multisets in $\calt_{n,2}$.
Consequently, $\mathcal T_{n+1,4}$ can be obtained inductively from $\mathcal T_{n,4}$ and $\mathcal T_{n,2}$,
with the following recursion formula:
\[
\# \mathcal T_{3,4}=2,\qquad \# \mathcal T_{n+1,4} = 6 (\#\mathcal T_{n,4})+2^{n-1}(2^{n-1}-1),\quad n=3,4,\dots.
\]
For example, we have $\#\calt_{4,4}=24$, $\#\calt_{5,4}=200$ and $\#\calt_{6,4}=1440$.

\section{Phase identities}\label{sec-phase}

We take logarithm on the system (\ref{equa-sep}) of equations
with $n$ unknowns and $2^n$ equations, and write $R_k=\log r_k$ for
$k=1,2,\dots, n$, and $A_{\bf i}=\log a_{\bf i}$ for ${\bf i}\in
I_n$. Then we have the equation $\Theta_nR=A$ with the linear map $\Theta_n:\mathbb R^n\to\calvnR$
defined in (\ref{theta}). Therefore, the equation (\ref{equa-sep}) has a solution if and only if
$A=\log a\in\image\Theta_n$. This is also the case for the phase part of anti-diagonal $c\in\calv_n^\sa$:
The equation $\tilde\alpha=c$ in Theorem \ref{sep-half-rank} (ii) has a solution if and only if the phase part
$\theta\in\calv_n^\ph$ of $c\in\calv_n^\sa$ belongs to the range of $\Theta_n$.

We take the orthonormal basis $\{e_\bfi:\bfi\in I_{[n]}\}$ of $\calv_n^\mathbb R$, then
the associated matrices with $\Theta_n$ with respect to these bases are given by
$$
\Theta_1=
\left(\begin{matrix}
+\\ -\end{matrix}\right),\qquad
\Theta_2=
\left(\begin{matrix}
+&+\\
+&-\\
-&+\\
-&-
\end{matrix}\right),\qquad
\Theta_3=
\left(\begin{matrix}
+&+&+\\
+&+&-\\
+&-&+\\
+&-&-\\
-&+&+\\
-&+&-\\
-&-&+\\
-&-&-
\end{matrix}\right),
$$
for $n=1,2,3$, where $+$ and $-$ represent $+1$ and $-1$, respectively.
We recall that a vector $\theta\in\calv_n^\ph$ satisfies the phase identities when $\theta\in\image\Theta_n$.
We rephrase the condition (ii) in Theorem \ref{sep-half-rank} to get the following:

\begin{proposition}\label{half-phase}
Let $\varrho=X(a,c)$ be an $n$-qubit {\sf X}-state with $a_{\bf i}a_{\bar{\bf i}}= |c_{\bf i}|^2=1$ for every index ${\bf i}$.
Then $\varrho$ is separable if and only if both $\log a\in\calv_n^\ph$ and the phase part $\theta\in\calv_n^\ph$ of $c$
satisfy the phase identities.
\end{proposition}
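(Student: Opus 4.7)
The plan is to read off the proposition as a direct logarithmic/exponential reformulation of the equivalence (i)$\Leftrightarrow$(ii) in Theorem \ref{sep-half-rank}, which has essentially been set up in the two paragraphs immediately preceding the statement. The key identity is that, from the definition of $\Theta_n$, one has $\log(r^\bfi) = \sum_{k=1}^n(1-2\bfi(k))\log r_k = [\Theta_n(\log r)]_\bfi$ and, analogously, $\alpha^\bfi = \exp(i[\Theta_n(\phi)]_\bfi)$ whenever $\alpha_k = e^{i\phi_k}$.

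First I would record the sanity checks implicit in the statement: $a_\bfi a_{\bar\bfi}=1$ forces $a_\bfi>0$ together with $\log a_{\bar\bfi}=-\log a_\bfi$, so $\log a$ really does lie in $\calv_n^\ph$; and $c_{\bar\bfi}=\bar c_\bfi$ combined with $|c_\bfi|=1$ forces any choice of phase part to satisfy $\theta_{\bar\bfi}\equiv -\theta_\bfi \pmod{2\pi}$, so after selecting appropriate representatives we may take $\theta\in\calv_n^\ph$. By Theorem \ref{sep-half-rank}, separability of $\varrho$ is equivalent to the existence of $r\in\mathbb R_+^n$ and $\alpha\in\mathbb T^n$ with $a=\tilde r$ and $c=\tilde\alpha$.

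For the diagonal half, the system $a=\tilde r$ reads $a_\bfi = r^\bfi$ for every $\bfi\in I_{[n]}$; taking logarithms and using the key identity converts it to the linear system $\log a = \Theta_n R$ with $R=\log r\in\mathbb R^n$. Since $\log\colon\mathbb R_+^n\to\mathbb R^n$ is a bijection, such an $r$ exists if and only if $\log a\in\image\Theta_n$, i.e.\ $\log a$ satisfies the phase identities. For the anti-diagonal half, writing $\alpha_k=e^{i\phi_k}$, the system $c=\tilde\alpha$ becomes $e^{i\theta_\bfi}=e^{i[\Theta_n\phi]_\bfi}$, i.e.\ $\theta\equiv\Theta_n\phi \pmod{2\pi}$; by reassigning integer multiples of $2\pi$ among the components of $\theta$ subject to $\theta_{\bar\bfi}=-\theta_\bfi$, one can arrange $\theta=\Theta_n\phi$ exactly. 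Therefore such an $\alpha\in\mathbb T^n$ exists if and only if the phase part $\theta$ (suitably chosen) lies in $\image\Theta_n$.

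The only genuine subtlety is this phase-lift issue, amounting to the convention that ``the phase part $\theta$ satisfies the phase identities'' is to be read as ``some representative of the phase part of $c$ lies in $\image\Theta_n$''; otherwise everything reduces to the bijection between $\mathbb R_+^n$ and $\mathbb R^n$ under $\log$ and between $\mathbb T^n$ and $\mathbb R^n/2\pi\mathbb Z^n$ under $e^{i\,\cdot\,}$. Once this convention is in place, assembling the two equivalences above with Theorem \ref{sep-half-rank} yields the proposition.
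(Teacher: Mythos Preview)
Your proposal is correct and follows exactly the route the paper takes: the paper does not give a separate proof of Proposition~\ref{half-phase} but simply presents it as a rephrasing of condition~(ii) in Theorem~\ref{sep-half-rank}, having explained in the preceding paragraph that taking logarithms turns the system $r^{\bfi}=a_\bfi$ into $\Theta_n R=\log a$ (and analogously for the phase part of $c$). Your write-up is in fact more careful than the paper's, since you make explicit the sanity checks that $\log a\in\calv_n^\ph$ and $\theta\in\calv_n^\ph$, and you flag the $2\pi$-ambiguity in the phase lift that the paper leaves implicit.
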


In order to check if $\theta$ satisfies the phase identities, we have to consider the orthogonal complement
$\calv_n^\ph\ominus \image\Theta_n$, which is a real vector space of dimension $\lambda(n):=2^{n-1}-n$.
For a given multiset $T$ of $n$ indices, we define the vector
$$
\xi_T=\frac 12\sum_{\bfi\in T}(e_\bfi - e_{\bar\bfi})\in \calv_n^\ph.
$$
Then, the multiset $T$ is balanced if and only if the vector $\xi_T$ belongs to $\calv_n^\ph\ominus \image\Theta_n$  because
we have the identity:
$$
\begin{aligned}
\left\langle \xi_T, \Theta(e_k) \right\rangle
&=\left\langle \frac 12\sum_{\bfi\in T}(e_\bfi - e_{\bar\bfi}), \Theta(e_k) \right\rangle\\
&=  \frac 12\left(\sum_{\bfi\in T} \Theta(e_k)_\bfi - \Theta(e_k)_{\bar \bfi}\right) \\
&= \# \{ \bfi \in T : \bfi (k)=0 \} - \# \{ \bfi \in T : \bfi (k)=1 \}.
\end{aligned}
$$
We say that a family $\calt$ of irreducible balanced multisets is {\sl basic} when $\{\xi_T:T\in\calt\}$
is a basis of $\calv_n^\ph\ominus \image\Theta_n$. We note that $\lan \theta,\xi_T\ran=\sum_{\bfi\in T}\theta_\bfi$,
which implies the following:

\begin{proposition}\label{phid}
Let $\calt$ be a basic family of irreducible balanced multisets. For $\theta\in\calv_n^\ph$, the following are equivalent:
\begin{enumerate}
\item[(i)]
$\theta$ satisfies the phase identities;
\item[(ii)]
$\sum_{\bfi\in T}\theta_\bfi=0$ for every $T\in\calt$;
\item[(iii)]
$\sum_{\bfi\in T}\theta_\bfi=0$ for every balanced multiset $T$.
\end{enumerate}
\end{proposition}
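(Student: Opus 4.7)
The plan is to convert everything into a single statement about orthogonal complements and then use the pairing identity $\langle \theta, \xi_T \rangle = \sum_{\bfi \in T}\theta_\bfi$ as the bridge between the three conditions. Specifically, for $\theta \in \calv_n^\ph$ the relation $\theta_{\bar\bfi} = -\theta_\bfi$ gives
$$
\left\langle \theta, \xi_T \right\rangle
=\frac 12\sum_{\bfi\in T}(\theta_\bfi - \theta_{\bar\bfi})
=\sum_{\bfi\in T}\theta_\bfi,
$$
for any multiset $T$. Combined with the computation preceding the proposition, which identifies $\xi_T$ as an element of $\calv_n^\ph \ominus \image\Theta_n$ exactly when $T$ is balanced, this will reduce every clause to a statement about orthogonality against a subspace.

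First I would prove (i) $\Longrightarrow$ (iii). Assume $\theta \in \image\Theta_n$ and let $T$ be an arbitrary balanced multiset. By the computation already in the text, $\xi_T \in \calv_n^\ph \ominus \image\Theta_n$, so $\langle \theta, \xi_T\rangle = 0$, which by the pairing identity above is exactly $\sum_{\bfi\in T}\theta_\bfi = 0$. The implication (iii) $\Longrightarrow$ (ii) is immediate because each $T \in \calt$ is, by definition of a basic family, an irreducible and in particular balanced multiset.

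Finally, for (ii) $\Longrightarrow$ (i), assume $\sum_{\bfi\in T}\theta_\bfi = 0$ for every $T \in \calt$. By the pairing identity this says $\langle \theta, \xi_T\rangle = 0$ for each $T \in \calt$. Since $\calt$ is basic, $\{\xi_T : T \in \calt\}$ is a basis of $\calv_n^\ph \ominus \image\Theta_n$, so $\theta$ is orthogonal to the whole subspace $\calv_n^\ph \ominus \image\Theta_n$, which forces $\theta \in \image\Theta_n$, i.e., $\theta$ satisfies the phase identities.

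There is no serious obstacle here: once the pairing $\langle \theta, \xi_T\rangle = \sum_{\bfi\in T}\theta_\bfi$ is in hand, every implication is a one-line invocation of either the orthogonality criterion or the definition of a basic family. The only point worth underlining is that the equivalence (ii) $\Longleftrightarrow$ (iii), which might look surprising because (iii) ranges over infinitely many multisets while (ii) ranges over the finite collection $\calt$, is entirely explained by the finite dimension of $\calv_n^\ph \ominus \image\Theta_n$ and the assumption that $\{\xi_T : T \in \calt\}$ is already a spanning family there.
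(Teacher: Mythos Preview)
Your proof is correct and follows exactly the approach the paper intends: the paper simply records the pairing identity $\langle \theta,\xi_T\rangle=\sum_{\bfi\in T}\theta_\bfi$ and states that it ``implies'' the proposition, and your argument is precisely the routine unpacking of that implication via the orthogonal decomposition $\calv_n^\ph=\image\Theta_n\oplus(\calv_n^\ph\ominus\image\Theta_n)$.
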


In the three qubit case, any basic family consists of a single multiset $\{000, 011, 101, 110\}$
or its conjugate $\{111,100, 010, 001\}$.
Therefore, $\theta\in\calv_3^\ph$ satisfies the phase identities if and only if
the identity $\theta_{000}+\theta_{011}+\theta_{101}+\theta_{110}=0$, or equivalently
$$
\theta_{000}+\theta_{011}=\theta_{001}+\theta_{010}
$$
holds, which arises from the relation
$000+011\equiv 001+010$.
This is exactly the phase identity
considered in \cite{han_kye_phase}.

The condition (vi) of Theorem \ref{sep-half-rank} suggests that there exists a basic family consisting of
irreducible balanced multisets of order four. To get a basic family of such kind, we take
an index which begins with $0$ and in which $1$ appears at least two
times. We call that a {\sl non-elementary} index.  They
correspond natural numbers which are not of the form $2^\ell$. We
decompose such number ${\bf i}$ as the sum of the smallest number of
the form $2^\ell$ with the same position of $1$ as ${\bf i}$ and the
other. In the three qubit case, we have only one non-elementary index
with the decomposition as follows:
\begin{equation}\label{pbms_3}
0+3=000+011 \equiv 001+010=1+2
\end{equation}
In the four qubit case, we have four non-elementary indices: One of them appears in the three qubit case, and the other three
cases are listed as follows:
\begin{equation}\label{pbms_4}
\begin{aligned}
0+5=0000+0101&\equiv 0001+0100=1+4\\
0+6=0000+0110&\equiv 0010+0100=2+4\\
0+7=0000+0111&\equiv 0001+0110=1+6\\
\end{aligned}
\end{equation}
There are $\lambda(5)=11$ non-elementary indices for the five qubit case. Four of them appear in the three and four qubit cases, and we list up
the other seven cases:
$$
\begin{aligned}
0+\phantom{1}9&\equiv 1+\phantom{1}8,\
0+10&\equiv 2+\phantom{1}8,\
0+11&\equiv 1+10,\
0+12\equiv 4+8,\\
0+13&\equiv 1+12,\
0+14&\equiv 2+12,\
0+15&\equiv 1+14.
\end{aligned}
$$

Formally, for a non-elementary index ${\bf i}=i_1i_2\dots i_n$, we take
the biggest $k$ such that $i_k=1$. Define ${\bf i}_{\min}\in I_n$ by
${\bf i}_{\min}(k)=1$ and ${\bf i}_{\min}(\ell)=0$ for $\ell\neq k$.
Put ${\bf i}_{\res}={\bf i}-{\bf i}_{\min}$.
Since ${\bf 0} + \bfi \equiv \bfi_{\min} + \bfi_{\res}$, we see that
\begin{equation}\label{non-ele}
T_{\bf i}=\{{\bf 0}, {\bf i}, \bar{\bf i}_{\min}, \bar{\bf i}_{\res}\}
\end{equation}
is an irreducible balanced multiset of order four, where ${\bf 0}=00\dots 0$.
We note that the number of non-elementary indices is exactly $2^{n-1}-n$,
which coincides with the dimension of $\calv_n^\ph\ominus \image\Theta_n$.
One may also verify that $\xi_{T_{\bfi}}$ with non-elementary indices $\bfi$'s are linearly independent,
so $\{T_{\bfi}\}$ with non-elementary indices $\bfi$'s is basic.
Therefore. $\theta\in\calv_4^\ph$ satisfies the phase identity if and only if the following four identities
\begin{equation}\label{4phaseidenity}
\begin{aligned}
\theta_{0000}+\theta_{0011}&=\theta_{0001}+\theta_{0010}\\
\theta_{0000}+\theta_{0101}&=\theta_{0001}+\theta_{0100}\\
\theta_{0000}+\theta_{0110}&=\theta_{0010}+\theta_{0100}\\
\theta_{0000}+\theta_{0111}&=\theta_{0001}+\theta_{0110}
\end{aligned}
\end{equation}
hold. These arise from the relations in (\ref{pbms_3}) and (\ref{pbms_4}).

Proposition \ref{half-phase} tells us that we need all the possible phase identities in Proposition \ref{phid}
to characterize the separability of half rank {\sf X}-states.
In some circumstances, parts of the phase identities
may be necessary for separability.

\begin{theorem}\label{corank-2}
Let $\varrho=X(a,c)$ be an $n$-qubit state with the phase part $\theta\in\calv_n^\ph$.
Suppose that there exist two indices ${\bf
i}_1,{\bf i}_2$ with ${\bf i}_1\neq{\bf i}_2$ and ${\bf
i}_1\neq\bar{\bf i}_2$ satisfying the relation
$$
\sqrt{a_{\bf i}a_{\bar{\bf i}}}=|c_{\bf i}|=\|c\|_\infty,\qquad {\bf i}={\bf i}_1,{\bf i}_2.
$$
If $\varrho$ is separable then we have the identities
\begin{equation}\label{yfjylgkh}
|c_{{\bf j}_1}|=|c_{{\bf j}_2}|,\qquad \theta_{{\bf i}_1}+ \theta_{{\bf i}_2}=\theta_{{\bf j}_1}+\theta_{{\bf j}_2}\mod 2\pi,
\end{equation}
whenever $\{{\bf i}_1,{\bf i}_2,\bar{\bf j}_1,\bar{\bf j}_2\}$ is an irreducible balanced multiset.
\end{theorem}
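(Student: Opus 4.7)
The plan is to work with a separable decomposition
$\varrho=\sum_{l}\lambda_l|\xi_l\rangle\langle\xi_l|$ in which each
$|\xi_l\rangle=\bigotimes_{j=1}^n (x^l_{j,0},x^l_{j,1})^{\ttt}$ is a
product vector, and to extract a very clean formula for the antidiagonal
entries. Setting $y^l_\bfi=\prod_j x^l_{j,\bfi(j)}$,
$A^l=\prod_j|x^l_{j,0}x^l_{j,1}|$ and
$\psi^l_j=\arg x^l_{j,0}-\arg x^l_{j,1}$, a direct expansion shows that
the modulus $|y^l_\bfi\bar y^l_{\bar\bfi}|=A^l$ does \emph{not} depend
on $\bfi$, whereas its argument is
$\sum_j(1-2\bfi(j))\psi^l_j=[\Theta_n(\psi^l)]_\bfi$; hence
$$
a_\bfi=\sum_l\lambda_l\prod_j|x^l_{j,\bfi(j)}|^{2},\qquad
c_\bfi=\sum_l\lambda_l\,A^l\,e^{{\rm i}[\Theta_n(\psi^l)]_\bfi},
$$
where only indices $l$ with $A^l>0$ contribute to the second sum
(because $A^l=0$ forces some tensor factor of $|\xi_l\rangle$ to have a
vanishing coordinate, and then every $y^l_\bfi\bar y^l_{\bar\bfi}$ is
zero).

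From the saturation hypothesis at $\bfi=\bfi_k$ ($k=1,2$), combining
the triangle inequality with the Cauchy--Schwarz bound
$\bigl(\sum_l\lambda_l A^l\bigr)^{2}
=\bigl(\sum_l\lambda_l\sqrt{R^l_{\bfi_k}R^l_{\bar\bfi_k}}\bigr)^{2}
\le a_{\bfi_k}a_{\bar\bfi_k}$ (with
$R^l_\bfi=\prod_j|x^l_{j,\bfi(j)}|^{2}$) produces the chain
$$
|c_{\bfi_k}|\;\le\;\sum_l\lambda_l A^l\;\le\;\sqrt{a_{\bfi_k}a_{\bar\bfi_k}}\;=\;|c_{\bfi_k}|,
$$
so both bounds are equalities. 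The triangle-equality step forces
all the summands $\lambda_l A^l e^{{\rm i}[\Theta_n(\psi^l)]_{\bfi_k}}$
with $A^l>0$ to share a common argument, which must be the argument of
$c_{\bfi_k}$ itself; hence
$[\Theta_n(\psi^l)]_{\bfi_k}\equiv\theta_{\bfi_k}\pmod{2\pi}$ for every
$l$ in the support, simultaneously for $k=1,2$.

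To finish, I invoke the balanced structure.  Since
$\bfi_1(j)+\bfi_2(j)=\bfj_1(j)+\bfj_2(j)$ for every $j$, the linear
identity
$[\Theta_n\psi]_{\bfi_1}+[\Theta_n\psi]_{\bfi_2}
=[\Theta_n\psi]_{\bfj_1}+[\Theta_n\psi]_{\bfj_2}$
holds on all of $\mathbb R^n$.  Substituting $\psi=\psi^l$ and using
the previous step yields
$[\Theta_n(\psi^l)]_{\bfj_1}+[\Theta_n(\psi^l)]_{\bfj_2}
\equiv\theta_{\bfi_1}+\theta_{\bfi_2}\pmod{2\pi}$ for every $l$ in the
support.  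Plugging this into the formula for $c_{\bfj_2}$ gives
$$
c_{\bfj_2}=\sum_l\lambda_l A^l\,e^{{\rm i}[\Theta_n(\psi^l)]_{\bfj_2}}
=e^{{\rm i}(\theta_{\bfi_1}+\theta_{\bfi_2})}
 \sum_l\lambda_l A^l\,e^{-{\rm i}[\Theta_n(\psi^l)]_{\bfj_1}}
=e^{{\rm i}(\theta_{\bfi_1}+\theta_{\bfi_2})}\,\overline{c_{\bfj_1}},
$$
and comparing moduli and arguments of this single complex identity
yields both $|c_{\bfj_1}|=|c_{\bfj_2}|$ and
$\theta_{\bfj_1}+\theta_{\bfj_2}\equiv\theta_{\bfi_1}+\theta_{\bfi_2}
\pmod{2\pi}$.

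The step I expect to be the main obstacle is not deep but is the heart
of the argument: one has to justify carefully that the modulus $A^l$ is
the \emph{same} for every $\bfi$ while the phase
$[\Theta_n(\psi^l)]_\bfi$ varies linearly with $\bfi$, since it is
precisely this asymmetry between modulus and phase that lets the
triangle-equality pinning at $\bfi_1,\bfi_2$ be transported along the
balanced relation to a single complex identity relating $c_{\bfj_1}$ and
$c_{\bfj_2}$.
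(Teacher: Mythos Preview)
Your argument is correct and follows the same overall strategy as the paper: take a separable decomposition into pure product states, use the saturation at $\bfi_1,\bfi_2$ together with the triangle inequality and Cauchy--Schwarz to pin the phases of the individual summands there, and then transport this pinning along the balanced relation $\bfi_1+\bfi_2\equiv\bfj_1+\bfj_2$. The paper obtains the per-summand identity $\theta^k_{\bfj_1}+\theta^k_{\bfj_2}=\theta_{\bfi_1}+\theta_{\bfi_2}$ by invoking Theorem~\ref{sep-half-rank} on the half-rank separable $\xx$-part of each pure product state; your explicit parametrization $\arg c^l_\bfi=[\Theta_n(\psi^l)]_\bfi$ is exactly the content of that theorem here, so this is the same step unpacked. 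Where your write-up is genuinely neater is the conclusion: the paper extracts $|c_{\bfj_1}|=|c_{\bfj_2}|$ and the phase relation by expanding $|c_{\bfj_1}|^2$ and $c_{\bfj_1}c_{\bfj_2}$ as double sums and symmetrizing, whereas your single complex identity $c_{\bfj_2}=e^{{\rm i}(\theta_{\bfi_1}+\theta_{\bfi_2})}\,\overline{c_{\bfj_1}}$ delivers both conclusions at once.
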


\begin{proof}
We write $\varrho = \sum_k \lambda_k \omega_k$ with $\lambda_k>0$, $\sum_k \lambda_k=1$
and pure product states $\omega_k$, whose {\sf X}-parts
are given by $X(a^k,c^k)$ with $a^k \in \calv_n^+$ and $c^k \in \calv_n^\sa$.
By the Cauchy-Schwartz inequality, we have
$$
\begin{aligned}
|c_\bfi| = |\sum_k \lambda_k c^k_\bfi| & \le \sum_k |\lambda_k c^k_\bfi| \\
& \le \sum_k \lambda_k \sqrt{a^k_\bfi a^k_{\bar \bfi}} \\
& \le (\sum_k \lambda_k a^k_\bfi)^{1 \slash 2} (\sum_k \lambda_k a^k_{\bar \bfi})^{1 \slash 2} = \sqrt{a_\bfi a_{\bar \bfi}},
\end{aligned}
$$
which become identities for $\bfi=\bfi_1,\bfi_2$. By the first identity
$|\sum_k \lambda_k c^k_\bfi| = \sum_k |\lambda_k c^k_\bfi|$ with $\bfi= \bfi_1, \bfi_2$,
we have $\theta_\bfi=\arg c^k_\bfi$ for $\bfi=\bfi_1,\bfi_2$.
Since the {\sf X}-part of a pure state is diagonal or non-diagonal of rank $2^{n-1}$, we have
$\sqrt{a^k_\bfi a^k_{\bar \bfi}}=|c^k_\bfj|$ for every $\bfi,\bfj\in I_{[n]}$, whenever $c^k \ne 0$ by the PPT condition.
This number will be denoted by $r_k$. It follows that
$$
c_\bfi^k = r_k e^{{\rm i} \theta_\bfi} \quad (\bfi=\bfi_1,\bfi_2) \qquad \text{and} \qquad \|c\|_\infty
=|c_{\bfi_1}|=|c_{\bfi_2}|=\sum_k \lambda_k r_k.
$$
Therefore, we have
\begin{equation}\label{jhgiugkjl}
|c_{\bfj_1}|^2 = c_{\bfj_1} \bar c_{\bfj_1}  = \left(\sum_k \lambda_k r_k e^{{\rm i}\theta_{\bfj_1}^k}\right)~
\left(\sum_l \lambda_l r_l e^{-{\rm i}\theta_{\bfj_1}^\ell}\right)
=\sum_{k,\ell}\lambda_k \lambda_\ell r_k r_\ell e^{{\rm i} (\theta_{\bfj_1}^k-\theta_{\bfj_1}^\ell)}
\end{equation}
with $\theta_\bfi^k=\arg c^k_\bfi$ for $c^k \ne 0$.
Now, suppose that  $\{{\bf i}_1,{\bf i}_2, \bar{\bf j}_1, \bar{\bf j}_2\}$ is an irreducible balanced multiset.
Then we have
\begin{equation}\label{phase-idxxx}
\theta_{\bfj_1}^k + \theta_{\bfj_2}^k = \theta_{\bfi_1}^k + \theta_{\bfi_2}^k = \theta_{\bfi_1} + \theta_{\bfi_2}
= \theta_{\bfi_1}^\ell + \theta_{\bfi_2}^\ell = \theta_{\bfj_1}^\ell + \theta_{\bfj_2}^\ell, \mod 2\pi
\end{equation}
by Theorem \ref{sep-half-rank}.
Therefore, $\theta_{\bfj_1}^k-\theta_{\bfj_1}^\ell$ may be replaced by $\theta_{\bfj_2}^\ell-\theta_{\bfj_2}^k$ in (\ref{jhgiugkjl}),
and so we have the identity
$|c_{{\bf j}_1}|=|c_{{\bf j}_2}|$.

We also have
$$
\begin{aligned}
c_{\bfj_1} c_{\bfj_2}
&= \left(\sum_k \lambda_k r_k e^{{\rm i}\theta_{\bfj_1}^k}\right)~ \left(\sum_\ell \lambda_\ell r_\ell e^{{\rm i}\theta_{\bfj_2}^\ell}\right)\\
&= \sum_{k,\ell} \lambda_k \lambda_\ell r_kr_\ell~ e^{{\rm i}(\theta_{\bfj_1}^k+\theta_{\bfj_2}^\ell)}\\
&=  {1 \over 2}\sum_{k,\ell} \lambda_k \lambda_\ell r_k r_\ell~ e^{{\rm i}(\theta_{\bfj_1}^k+\theta_{\bfj_2}^\ell)}
     +\lambda_\ell \lambda_k r_\ell r_k~ e^{{\rm i}(\theta_{\bfj_1}^\ell+\theta_{\bfj_2}^k)}\\
&= {1 \over 2}\sum_{k,\ell} \lambda_k \lambda_\ell r_k r_\ell
   \left(e^{{\rm i}(\theta_{\bfj_1}^k+\theta_{\bfj_2}^k)} e^{{\rm i}(\theta_{\bfj_2}^l-\theta_{\bfj_2}^k)}
   +e^{{\rm i}(\theta_{\bfj_1}^\ell+\theta_{\bfj_2}^\ell)} e^{{\rm i}(\theta_{\bfj_2}^k-\theta_{\bfj_2}^\ell)}\right).
\end{aligned}
$$
Applying the identity (\ref{phase-idxxx}), we have
$$
c_{\bfj_1} c_{\bfj_2}
=e^{{\rm i}(\theta_{\bfi_1}+\theta_{\bfi_2})}\sum_{k,\ell} \lambda_k \lambda_\ell r_k r_\ell
    \cos (\theta_{\bfj_2}^\ell-\theta_{\bfj_2}^k).
$$
We see that the phase part of $c_{\bfj_1} c_{\bfj_2}$ is given by $\theta_{\bfi_1}+\theta_{\bfi_2}$, because
$$
\begin{aligned}
\sum_{k,l} \lambda_k \lambda_l r_k r_l \cos (\theta_{\bfj_2}^l-\theta_{\bfj_2}^k)
& = \sum_{k,l} \lambda_k \lambda_l r_k r_l (\cos \theta_{\bfj_2}^l \cos \theta_{\bfj_2}^k+\sin \theta_{\bfj_2}^l \sin \theta_{\bfj_2}^k)\\
& = \left(\sum_k \lambda_k r_k \cos \theta_{\bfj_2}^k\right)^2
    +\left(\sum_k \lambda_k r_k \sin \theta_{\bfj_2}^k\right)^2 \ge 0.
\end{aligned}
$$
This implies the required identity
$\theta_{{\bf i}_1}+ \theta_{{\bf i}_2}=\theta_{{\bf j}_1}+\theta_{{\bf j}_2}$.
\end{proof}

Note that Theorem \ref{corank-2} may be applied whenever a separable {\sf X}-state $\varrho$ has corank $\ge 2$.
In the three qubit case, we apply Theorem \ref{corank-2} for $\{\bfi_1,\bar\bfi_2\}=\{000,110\}$. In this case, we see that
$\{{\bf i}_1,\bar{\bf i}_2,\bar{\bf j}_1, \bfj_2\}$ is an irreducible balanced multiset if and only if
$\{\bar \bfj_1,\bfj_2\}=\{101,011\}$,
which implies the separability criteria $|c_{010}|=|c_{100}|$ and $\theta_{000}+ \theta_{110}=\theta_{010}+\theta_{100}$,
or equivalently $|c_{010}|=|c_{011}|$ and $\theta_{000}+\theta_{011}=\theta_{001}+\theta_{010}$.
In this way, we see that if a three qubit {\sf X}-state $\varrho=X(a,c)$ of rank six is separable then
there exists a partition $\{\bfi_1,\bfi_2\}\cup\{\bfj_1,\bfj_2\}$ of indices beginning $0$ such that
$$
|c_{{\bf i}_1}|=|c_{{\bf i}_2}|,\qquad
|c_{{\bf j}_1}|=|c_{{\bf j}_2}|,\qquad \theta_{000}+ \theta_{011}=\theta_{001}+\theta_{010}.
$$
Note that the {phase identity} $\theta_{000}+ \theta_{011}=\theta_{001}+\theta_{010}$
must be retained regardless of partition.
Theorem 5.1 of \cite{han_kye_phase} tells us that this is essentially a characterization of
separability of a three qubit {\sf X}-state of rank six.

\section{phase differences}\label{sec-phase-diff}

The phases also play important roles to investigate properties of the numbers $\|u\|_{\xx_n}$ and $\|c\|^\prime_{\xx_n}$
in the characterization of block-positivity and separability.
Suppose that $u\in \calv_n^\sa$ has the phase part $\phi\in\calv_n^\ph$.
We write $\alpha_k=e^{{\rm i}\theta_k}$ for each $k=1,2,\dots, n$. Then we have
$$
\alpha^\bfi =\prod_{k=1}^n e^{{\rm i}(1-2\bfi(k))\theta_k} = e^{{\rm i} \sum_{k=1}^n(1-2\bfi(k))\theta_k}
=e^{{\rm i}\Theta_n(\theta)_\bfi}
$$
for each $\bfi\in I_{[n]}$, and so it follows that
$$
\sum_{\bfi\in I_{[n]}} u_\bfi \alpha^\bfi
=\sum_{\bfi\in I_{[n]}}|u_\bfi|e^{{\rm i}(\phi_\bfi+\Theta_n(\theta)_\bfi)}
=\langle |u|, e^{{\rm i}(\phi+\Theta_n(\theta))}\rangle.
$$
Therefore, we see that the range of the function
$\alpha  \mapsto \sum_{\bfi\in I_n} u_\bfi \alpha^\bfi \in \mathbb R$ defined on $\mathbb T^n$
is determined the modulus vector $|u|$ and the coset $\phi+\image\Theta_n$ in the quotient space $\calv_n^\ph/\image\Theta_n$.
The {\sl phase difference} of $u\in\calv_n^\sa$ with nonzero entries
is defined by the coset to which its phase part $\phi$ belongs. It is also called the phase difference of $\phi\in\calv_n^\ph$.
The phase differences of $\phi,\psi\in\calv_n^\ph$ coincide if and only if
$\phi-\psi$ satisfies the phase identities. The phase difference is uniquely expressed by a vector in
the space $\calv_n^\ph\ominus\image\Theta_n$, which will be denoted by $\Phi_n(u)$.
We see that $u\in\calv_n^\sa$ satisfies the phase identities if and only if
it has the trivial phase difference. We summarize our discussion as follows:

\begin{proposition}\label{norm-phase}
If $u,v\in \calv_n^\sa $ satisfy $\Phi_n(u)=\Phi_n(v)$ and $|u_\bfi|=|v_\bfi|$ for each $\bfi\in I_{[n]}$, then we have
$\|u\|_{\xx_n}=\|v\|_{\xx_n}$.
\end{proposition}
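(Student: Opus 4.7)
The plan is to observe that the proposition is essentially an immediate reformulation of the computation displayed just before its statement. First I would rewrite the defining supremum for $\|u\|_{\xx_n}$ using the substitution $\alpha_k=e^{{\rm i}\theta_k}$; by the identity
$$
\sum_{\bfi\in I_{[n]}} u_\bfi \alpha^\bfi = \langle |u|, e^{{\rm i}(\phi_u+\Theta_n(\theta))}\rangle
$$
already derived, this yields
$$
\|u\|_{\xx_n}=\sup_{\theta\in\mathbb R^n}\langle |u|,e^{{\rm i}(\phi_u+\Theta_n(\theta))}\rangle,
$$
where $\phi_u\in\calv_n^\ph$ is the phase part of $u$, and similarly for $v$ with phase part $\phi_v$. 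Note that any ambiguity in $\phi_u$ at indices $\bfi$ where $u_\bfi=0$ is harmless, because such indices contribute $0$ to the pairing, and the hypothesis $|u_\bfi|=|v_\bfi|$ forces the same indices to vanish for $v$.

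Next I would translate the hypothesis $\Phi_n(u)=\Phi_n(v)$: since it asserts that the orthogonal projections of $\phi_u$ and $\phi_v$ onto $\calv_n^\ph\ominus \image\Theta_n$ coincide, the difference $\phi_u-\phi_v$ lies in $\image\Theta_n$, so I can pick $\theta_0\in\mathbb R^n$ with $\phi_u-\phi_v=\Theta_n(\theta_0)$. Combining with $|u|=|v|$, the expression above becomes
$$
\|u\|_{\xx_n}=\sup_{\theta\in\mathbb R^n}\langle |v|,e^{{\rm i}(\phi_v+\Theta_n(\theta_0+\theta))}\rangle.
$$
The change of variables $\theta'=\theta+\theta_0$ is a bijection of $\mathbb R^n$, so the supremum equals $\|v\|_{\xx_n}$.

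There is no genuine obstacle: the only substantive point is that the translation $\theta\mapsto\theta+\theta_0$ preserves the supremum over $\mathbb R^n$, which is immediate. In effect the proposition merely records that the range of the real-valued function $\alpha\mapsto \sum_{\bfi} u_\bfi \alpha^\bfi$ on $\mathbb T^n$ depends on $u$ only through the modulus vector $|u|$ and the coset $\phi_u+\image\Theta_n$, an observation already made in the paragraph introducing the phase difference.
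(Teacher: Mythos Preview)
Your proposal is correct and follows exactly the paper's approach. In fact, the paper does not give a separate proof of this proposition at all: it presents the proposition as a summary (``We summarize our discussion as follows'') of the computation in the preceding paragraph, and your argument is simply that computation with the translation $\theta\mapsto\theta+\theta_0$ made explicit.
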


In the three qubit case, we note that $\calv_3^\ph\ominus\image\Theta_3$ is
of dimension $\lambda(3)=1$ and it is spanned by the vector $\xi_T$ with
the balanced multiset $T=\{000, 011, 101, 110\}$. In this case, the phase difference
can be expressed by the scalar
$$
\theta_{000}-\theta_{001}-\theta_{010}+\theta_{011}=
\lan\Phi_n(c),\xi_T\ran.
$$
This is exactly the phase difference of a three qubit state introduced in \cite{han_kye_phase}.
The following theorem tells us that the separability of an {\sf X}-shaped multi-qubit states
depends only on the phase differences of the anti-diagonal parts, as well as magnitudes of diagonal and
anti-diagonal parts.

\begin{theorem}\label{dual-norm-phase}
Suppose that $c,d\in \calv_n^\sa$
satisfy $\Phi_n(c)=\Phi_n(d)$ and
$|c_\bfi|=|d_\bfi|$ for each $\bfi\in I_{[n]}$. Then we have $\|c\|_{\xx_n}^\prime=\|d\|_{\xx_n}^\prime$.
\end{theorem}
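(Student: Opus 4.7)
The plan is to upgrade Proposition \ref{norm-phase} to the dual side by exhibiting an explicit isometry of $(\calv_n^\sa,\|\cdot\|_{\xx_n})$ that carries the linear functional $\langle\cdot,c\rangle$ to $\langle\cdot,d\rangle$. Since $\Phi_n(c)=\Phi_n(d)$, the phase parts $\phi^c,\phi^d\in\calv_n^\ph$ of $c$ and $d$ differ by an element of $\image\Theta_n$, so I would fix some $\theta\in\mathbb R^n$ with $\phi^c_\bfi-\phi^d_\bfi=\Theta_n(\theta)_\bfi$ modulo $2\pi$ for every $\bfi\in I_{[n]}$. Combined with the hypothesis $|c_\bfi|=|d_\bfi|$, this yields the pointwise identity $c_\bfi=d_\bfi\,e^{{\rm i}\Theta_n(\theta)_\bfi}$.

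Next I would define $\Psi\colon\calv_n^\sa\to\calv_n^\sa$ by $\Psi(u)_\bfi=u_\bfi\,e^{{\rm i}\Theta_n(\theta)_\bfi}$. Checking that $\Psi(u)\in\calv_n^\sa$ uses the key property $\Theta_n(\theta)_{\bar\bfi}=-\Theta_n(\theta)_\bfi$ built into the definition (\ref{theta}), together with $u_{\bar\bfi}=\bar u_\bfi$; then $\Psi$ is clearly a bijection (its inverse negates $\theta$). The two properties I need are norm preservation and transport of the pairing. For the norm, writing $\alpha_k=e^{{\rm i}\tau_k}$ so that $\alpha^\bfi=e^{{\rm i}\Theta_n(\tau)_\bfi}$, one has
$$
\sum_{\bfi\in I_{[n]}}\Psi(u)_\bfi\alpha^\bfi=\sum_{\bfi\in I_{[n]}}u_\bfi\,e^{{\rm i}\Theta_n(\theta+\tau)_\bfi}=\sum_{\bfi\in I_{[n]}}u_\bfi\beta^\bfi
$$
with $\beta_k=e^{{\rm i}(\theta_k+\tau_k)}$; since $\beta$ ranges over all of $\mathbb T^n$ as $\tau$ does, taking the supremum gives $\|\Psi(u)\|_{\xx_n}=\|u\|_{\xx_n}$. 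This is essentially the same substitution that underlies Proposition \ref{norm-phase}. For the pairing, the identity $c_\bfi=d_\bfi e^{{\rm i}\Theta_n(\theta)_\bfi}$ gives at once $\langle u,c\rangle=\langle \Psi(u),d\rangle$.

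Taking the supremum over the unit sphere $\{u:\|u\|_{\xx_n}=1\}$ on both sides and using that $\Psi$ is a bijective isometry of $(\calv_n^\sa,\|\cdot\|_{\xx_n})$, one concludes $\|c\|^\prime_{\xx_n}=\|d\|^\prime_{\xx_n}$. There is no serious obstacle here; the only point requiring any care is the verification that $\Psi$ maps $\calv_n^\sa$ into itself, which is exactly where the antisymmetry $\Theta_n(\theta)_{\bar\bfi}=-\Theta_n(\theta)_\bfi$ is needed, and the observation that the modulo-$2\pi$ ambiguity in extracting $\theta$ is harmless because only $e^{{\rm i}\Theta_n(\theta)_\bfi}$ appears in the construction.
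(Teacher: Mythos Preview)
Your proof is correct and is essentially the same as the paper's: the map $\Psi(u)_\bfi=u_\bfi e^{{\rm i}\Theta_n(\theta)_\bfi}$ is exactly the assignment $z\mapsto w$ used there (with $w_\bfi=z_\bfi e^{{\rm i}(\arg c_\bfi-\arg d_\bfi)}$), and your direct verification that $\Psi$ preserves $\|\cdot\|_{\xx_n}$ is the computation behind the paper's appeal to Proposition~\ref{norm-phase}. The only difference is packaging: you phrase it as a single bijective isometry rather than producing $w$ from $z$ and then arguing symmetrically.
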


\begin{proof}
For  a given $z\in \calv_n^\sa $ with $\|z\|_{\xx_n}=1$, we define $w\in \calv_n^\sa $ by
$$
w_\bfi=|z_\bfi|e^{{\rm i}(\arg z_\bfi +\arg c_\bfi-\arg d_\bfi)}, \qquad \bfi\in I_{[n]}.
$$
Since $\arg c-\arg d\in \image\Theta_n$ by $\Phi_n(c)=\Phi_n(d)$, we have
$$
\arg w+\image\Theta_n=\arg z+\arg c-\arg d+\image\Theta_n =\arg z+\image\Theta_n,
$$
and so we have $\|z\|_{\xx_n}=\|w\|_{\xx_n}$ by Proposition \ref{norm-phase}. We also have
$$
\langle c,z\rangle
=\sum_{\bfi\in I_{[n]}}|c_\bfi||z_\bfi|e^{{\rm i}(\arg c_\bfi+\arg z_\bfi)}
=\sum_{\bfi\in I_{[n]}}|d_\bfi||w_\bfi|e^{{\rm i}(\arg d_\bfi+\arg w_\bfi)}
=\langle d,w\rangle.
$$
We have shown that for every $z\in \calv_n^\sa $ with $\|z\|_{\xx_n}=1$ there exists $w\in \calv_n^\sa $ such that
$\|w\|_{\xx_n}=1$ and $\langle c,z\rangle=\langle d,w\rangle$. This
implies that $\|c\|_{\xx_n}^\prime\le \|d\|_{\xx_n}^\prime$. The reverse inequality holds
by the same argument.
\end{proof}

Recall the inequality $\|c\|_{\xx_n}^\prime\ge \|c\|_\infty$ in (\ref{dual-norm-ineq}).
Now, we proceed to get another lower bound of the dual norm
$\|c\|_{\xx_n}^\prime$ for $c\in \calv_n^\sa $ in terms of the norms $\|\cdot\|_{\xx_n}$ of some elements in $\calv_n^\sa $.
To do this, we take $\phi\in \calv_n^\ph$ to get
$$
\|c\|_{\xx_n}^\prime
\ge \frac{\langle c,\tilde\alpha\circ e^{{\rm i}\phi}\rangle}{\|\tilde\alpha\circ e^{{\rm i}\phi}\|_{\xx_n}}
$$
for every $\alpha\in\mathbb T^n$, where $a\circ b$ is define by $(a\circ b)_\bfi=a_\bfi b_\bfi$.
Now, we have
$$
\langle c,\tilde\alpha\circ e^{{\rm i}\phi}\rangle=\langle c\circ e^{{\rm i}\phi}, \tilde\alpha\rangle,\qquad
\|\tilde\alpha\circ e^{{\rm i}\phi}\|_{\xx_n}=\|e^{{\rm i}\phi}\|_{\xx_n}
$$
by (\ref{A_nB_n}) or Proposition \ref{norm-phase}, because $\Phi_n(\tilde\alpha)={\bf 0}$. Therefore, we have the inequality
\begin{equation}\label{ineq-esi-dual}
\|c\|_{\xx_n}^\prime
\ge \max_{\phi\in\calv_n^\ph}\frac{\|c\circ e^{{\rm i}\phi}\|_{\xx_n}}{\|e^{{\rm i}\phi}\|_{\xx_n}},
\end{equation}
for every $c\in\calv_n^\sa$. In the three qubit case of $n=3$, this is exactly Proposition 5.6 of \cite{chen_han_kye}.
The number $\|u\|_{\xx_n}$ has a natural upper bounds $\|u\|_1$ by Proposition \ref{b_n_ineq}.
If all the entries of $u$ is nonnegative then it is clear that $\|u\|_{\xx_n}=\|u\|_1$
by taking $\alpha={\bf 1}$ in (\ref{A_nB_n}).
We investigate when the strict inequalities hold in $\|u\|_{\xx_n}\le\|u\|_1$ and $\|c\|_{\xx_n}^\prime\ge \|c\|_\infty$.

\begin{theorem}\label{sep-ppt}
For $u,c\in\calv_n^\sa$, we have the following:
\begin{enumerate}
\item[(i)]
$\|u\|_{\xx_n}= \|u\|_1$ holds if and only if $u$ has the trivial phase difference;
\item[(ii)]
Suppose that $|c_\bfi|=|c_{\bfj}|$ for every $\bfi,\bfj\in I_{[n]}$. Then $\|c\|^\prime_{\xx_n}= \|c\|_\infty$ if and only if $c$ has
the trivial phase difference.
\end{enumerate}
\end{theorem}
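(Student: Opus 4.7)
The plan is to prove part (i) first and then bootstrap to part (ii) using inequality (\ref{ineq-esi-dual}).

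For (i), the key observation is that $\sum_{\bfi} u_{\bfi}\alpha^{\bfi}$ is always a real number (since pairing $\bfi$ with $\bar\bfi$ yields complex conjugate summands), so its value is at most $\|u\|_1$, with equality iff every term $u_{\bfi}\alpha^{\bfi}$ is a nonnegative real. Writing $\alpha_k = e^{{\rm i}\tau_k}$, this condition reads $\arg u_{\bfi} + \Theta_n(\tau)_{\bfi} \equiv 0 \pmod{2\pi}$ at each $\bfi$ with $u_{\bfi}\ne 0$. For the forward direction, if $\|u\|_{\xx_n} = \|u\|_1$, compactness of $\mathbb T^n$ produces a maximizer $\tau$, and resolving the ambiguity of $\arg u_{\bfi}$ at the zero entries by setting $\arg u_{\bfi}:=-\Theta_n(\tau)_{\bfi}$ puts the phase part of $u$ in $\image\Theta_n$. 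For the converse, if the phase part equals $\Theta_n(\tau)$, the choice $\alpha_k = e^{-{\rm i}\tau_k}$ makes every summand equal to $|u_{\bfi}|$, so the supremum attains $\|u\|_1$.

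For (ii), write $L = \|c\|_\infty$, so by hypothesis $|c_{\bfi}| = L$ for every $\bfi$. The easy direction: if $c$ has trivial phase difference, then $c = L\tilde\alpha$ for a suitable $\alpha \in \mathbb T^n$, and for every $u \in \calv_n^\sa$ we have $\langle u, c\rangle = L\sum u_{\bfi}\alpha^{\bfi} \le L\|u\|_{\xx_n}$, so $\|c\|^\prime_{\xx_n} \le L$; combined with (\ref{dual-norm-ineq}) this yields equality. For the converse, I would apply (\ref{ineq-esi-dual}) with $\phi = -\arg c \in \calv_n^\ph$; then $c\circ e^{{\rm i}\phi}$ is the constant vector $L\bfone$, whose $\xx_n$-norm equals $L\cdot 2^n$ (attained at $\alpha = \bfone$, since $\sum_{\bfi}\alpha^{\bfi} = \prod_k(\alpha_k+\alpha_k^{-1}) \le 2^n$). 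Inequality (\ref{ineq-esi-dual}) combined with the hypothesis $\|c\|^\prime_{\xx_n} = L$ forces $\|e^{{\rm i}\phi}\|_{\xx_n} \ge 2^n = \|e^{{\rm i}\phi}\|_1$, so by Proposition \ref{b_n_ineq} equality holds. Part (i) then says that the phase part of $e^{{\rm i}\phi}$, namely $\phi = -\arg c$, lies in $\image\Theta_n$, which is exactly the assertion that $c$ has trivial phase difference.

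The main subtlety I anticipate is the mod-$2\pi$ nature of phase parts and the freedom in $\arg u_{\bfi}$ when $u_{\bfi} = 0$: membership ``$\arg u \in \image \Theta_n$'' must be read modulo $2\pi$, as was done implicitly in Theorem \ref{corank-2}, and at zero entries the argument can be assigned any value to achieve this membership. This is precisely what allows the forward direction of (i) to go through cleanly, and everything else reduces to manipulations with the defining sup/inf formulae and the already established inequality (\ref{ineq-esi-dual}).
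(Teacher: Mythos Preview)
Your proposal is correct and follows essentially the same route as the paper: for (i) you use the equality case of $\sum u_\bfi\alpha^\bfi\le\|u\|_1$ to force $u_\bfi\alpha^\bfi=|u_\bfi|$, and for the converse of (ii) you apply inequality (\ref{ineq-esi-dual}) with $\phi=-\arg c$ to reduce to part (i), exactly as the paper does. The only cosmetic difference is in the easy direction of (ii): you write $c=L\tilde\alpha$ and bound $\langle u,c\rangle$ directly, whereas the paper first shows $\|\mathbf{1}\|^\prime_{\xx_n}=1$ and then invokes Theorem~\ref{dual-norm-phase}; your version is arguably cleaner but logically equivalent.
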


\begin{proof}
If the phase part of $u$ is ${\bf 0}\in\calv_n^\ph$, then we have $\|u\|_{\xx_n}=\|u\|_1$.
This is also the case whenever $\Phi_n(u)=0$ by Proposition \ref{norm-phase}.
We note that
$$
u_\bfi\alpha^\bfi+u_{\bar\bfi}\alpha^{\bar\bfi}
\le |u_\bfi\alpha^\bfi+u_{\bar\bfi}\alpha^{\bar\bfi}|\le |u_\bfi|+|u_{\bar\bfi}|
$$
for each $\bfi\in I_{[n]}$.
If $\|u\|_{\xx_n}=\|u\|_1$ then there exists $\alpha\in\mathbb T^n$ such that
 $u_{\bf i}\alpha^{\bf i}=|u_\bfi|$ for every $\bfi\in I_{[n]}$ in
(\ref{bbbbbb}). Therefore, the phase part of $u$ is given by $\theta$ with $\theta_\bfi=-\arg\alpha^\bfi$,
which belongs to $\image\Theta_n$. This proves the statement (i).

To prove (ii), we may assume that $|c_\bfi|=1$ for every $\bfi\in I_{[n]}$. If $\|u\|_{\xx_n}=1$ then we have
$\lan u,{\bf 1}\ran =\sum_{\bfi\in I_{[n]}}u_{\bfi}\le \|u\|_{\xx_n}=1$,
which implies that $\|{\bf 1}\|^\prime_{\xx_n}\le 1$. Therefore, we have $\|{\bf 1}\|^\prime_{\xx_n}= 1$ by (\ref{dual-norm-ineq}),
and $\|c\|^\prime_{\xx_n}= 1$ whenever $c$ has a trivial phase difference by Theorem \ref{dual-norm-phase}.
For the converse, suppose that $\|c\|^\prime_{\xx_n}= 1$. Then we have
$\|e^{{\rm i}\phi}\|_{\xx_n} \ge \|c\circ e^{{\rm i}\phi}\|_{\xx_n}$ for every $\phi\in\calv_n^\ph$ by (\ref{ineq-esi-dual}).
Take $\phi=-\theta$ with the phase part $\theta$ of $c\in\calv_n^\sa$. Then we have
$$
\|e^{{\rm i}(-\theta)}\|_{\xx_n} \ge \|c\circ e^{{\rm i}(-\theta)}\|_{\xx_n}= \|{\bf 1}\|_{\xx_n}=\|{\bf 1}\|_1=\|e^{{\rm i}(-\theta)}\|_1.
$$
This implies that $e^{{\rm i}(-\theta)}$ has the trivial phase difference by (i), and completes the proof.
\end{proof}

Finally, we show that our criterion, Theorem \ref{criterion}, detects
nonzero volume of PPT entanglement with respect to the affine space of all self-adjoint
matrices with trace one. To do this, we consider the
general situations to compare two convex sets. Let $C$ be a convex set in a
finite dimensional real vector space. Recall that a point $p\in C$ is called an
interior point when it is a topological interior point of $C$ with
respect to the affine manifold generated by $C$, and a boundary
point if it is not an interior point. We denote by $\inte C$ and
$\partial C$ the sets of all interior points and the boundary points of $C$,
respectively. If two convex sets $C_1$ and $C_2$ generate the common affine manifold $M$, then we may use
all the topological notions, like interior, closure, boundary, with respect $M$.

\begin{proposition}\label{general}
Let $C_1\subset C_2$ be a finite dimensional convex sets which generate a common affine manifold.
Then the following are equivalent:
\begin{enumerate}
\item[(i)]
$C_2\setminus C_1$ has the nonempty interior;
\item[(ii)]
$\partial \overline{C_1}\cap\inte C_2$ is nonempty.
\end{enumerate}
\end{proposition}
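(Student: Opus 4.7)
The plan is to work entirely inside the common affine manifold $M$ generated by $C_1$ and $C_2$, with all topological notions (interior, boundary, closure) taken relative to $M$. The proof will lean on two standard facts from convex analysis: (a) for a convex set $C$ with nonempty relative interior, $\inte\overline C=\inte C$; (b) if $p$ is in the relative interior of a convex set $D$ and $q\in D$, then the half-open segment $\{(1-t)p+tq:0\le t<1\}$ lies in $\inte D$. Both $C_1$ and $C_2$ have nonempty interior in $M$ by hypothesis.

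The direction (ii) $\Longrightarrow$ (i) is the easy half. Given $p\in\partial\overline{C_1}\cap\inte C_2$, choose an open ball $U\subset M$ around $p$ with $U\subset C_2$. Because $p$ is a boundary point of $\overline{C_1}$, the set $V:=U\setminus\overline{C_1}$ is open and nonempty, and visibly $V\subset C_2\setminus\overline{C_1}\subset C_2\setminus C_1$. So $C_2\setminus C_1$ has nonempty interior.

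For (i) $\Longrightarrow$ (ii), fix a nonempty open set $V\subset C_2\setminus C_1$ and a point $q\in V$, together with a point $p_0\in\inte C_1$, which exists since $C_1$ generates $M$; note $p_0\in\inte C_2$ as well. Split into cases. If $q\notin\overline{C_1}$, parametrize the segment $\gamma(t)=(1-t)p_0+tq$ for $t\in[0,1]$. Since $\gamma(0)=p_0\in\overline{C_1}$ and $\gamma(1)=q\notin\overline{C_1}$, by continuity and closedness of $\overline{C_1}$ there is a largest $t^*\in[0,1)$ with $\gamma(t^*)\in\overline{C_1}$, and then $\gamma(t^*)\in\partial\overline{C_1}$. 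Because $p_0,q\in\inte C_2$ and $C_2$ is convex, fact (b) applied symmetrically gives $\gamma(t^*)\in\inte C_2$, so $\gamma(t^*)$ lies in $\partial\overline{C_1}\cap\inte C_2$. If instead $q\in\overline{C_1}$, then $q\notin C_1$ together with fact (a) gives $q\notin\inte C_1=\inte\overline{C_1}$, so $q\in\partial\overline{C_1}$; and $q\in V$ with $V\subset C_2$ open in $M$ forces $q\in\inte C_2$. Either way $\partial\overline{C_1}\cap\inte C_2$ is nonempty.

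The main subtlety is bookkeeping: one has to remember that all interiors and boundaries are taken inside $M$, not the ambient vector space, and one must not confuse $\partial\overline{C_1}$ with $\overline{C_1}\setminus C_1$. The invocation of fact (a), to rule out the possibility that $q$ sits in $\overline{C_1}$ while still being a relative-interior point, is the one nontrivial convex-analysis input; everything else is an elementary segment argument.
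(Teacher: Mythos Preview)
Your proof is correct. For (i) $\Rightarrow$ (ii) your argument is essentially the paper's: both pick an interior point $p_0$ of $C_1$, a point $q$ in the interior of $C_2\setminus C_1$, and locate on the segment the last point still in $\overline{C_1}$. One remark: your second case $q\in\overline{C_1}$ is vacuous, since $V$ is an open neighbourhood of $q$ disjoint from $C_1$, forcing $q\notin\overline{C_1}$; so fact~(a) is never actually used. For (ii) $\Rightarrow$ (i) you take a shorter route than the paper: you simply intersect an open ball in $C_2$ around the given boundary point with the complement of $\overline{C_1}$, whereas the paper runs a second segment argument, extending the line from $p_0$ through $p_1$ past $p_1$ and using convexity of $\overline{C_1}$ to conclude the overshoot leaves $\overline{C_1}$. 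Your version avoids that extra step.
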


\begin{proof}
The assumption implies that $\inte C_1\subset \inte C_2$.
We fix a common interior point $p_0$ of $C_1$ and $C_2$. For the direction (i) $\Longrightarrow$ (ii),
take an interior point $p_1$ of $C_2\setminus C_1$, and consider the line segment $p_t=(1-t)p_0+tp_1$.
Put $t_0=\sup\{t:p_t\in\overline{C_1}\}$. Then $t_0>0$ since $p_0$ is an interior point of $C_1$.
We also have $t_0<1$, because $p_1$ is an interior point of $C_2\setminus C_1$. Therefore, we see that $p_{t_0}$
belongs to $\partial \overline{C_1}\cap\inte C_2$.

For (ii) $\Longrightarrow$ (i), take $p_1\in \partial \overline{C_1}\cap\inte C_2$ and
put $t_0=\sup\{t:p_t\in C_2\}$. Since $p_1$ is an interior point of $C_2$, we have $t_0>1$.
Then $p_t$ with $1<t<t_0$ is an interior point of $C_2$. Because $\overline{C_1}$ is convex, we conclude that
$p_t$ is an interior point of $C_2\setminus C_1$.
\end{proof}

Proposition \ref{general} has been used implicitly in \cite{{kye_rev},{kye_multi_dual}},
to see that an entanglement witness $W$ detects nonzero volume of PPT entanglement if and only if
every partial transpose of $W$ has the full rank.
We denote by ${\mathcal C}$ the set of all states satisfying the criterion in Theorem \ref{criterion}.
We show that ${\mathcal C}$ is convex and closed.

\begin{proposition}\label{convex_general}
The set $\mathcal C$ is convex and closed.
\end{proposition}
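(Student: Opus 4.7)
The plan is to exploit the complementary analytic structure of the two quantities in the criterion: $a\mapsto \Delta_n(a)$ is concave and upper semi-continuous, while $c\mapsto \|c\|'_{\xx_n}$ is a norm, hence convex and continuous. Once these properties are in hand, both conclusions reduce to the linearity of the map $\varrho\mapsto(a,c)$ that extracts the $\xx$-part.

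First I would read off the analytic properties. Since $\Delta_n(a)=\inf\{\langle s,a\rangle:s\in\calv_n^+,\ \delta_n(s)=1\}$ is, by construction, a pointwise infimum of continuous linear functionals in the variable $a$, it is automatically concave and upper semi-continuous on $\calv_n^+$ (and takes values in $[0,\infty)$ since $s,a\ge 0$). On the other side, $\|\cdot\|'_{\xx_n}$ is explicitly introduced as the dual norm of $\|\cdot\|_{\xx_n}$ on the finite-dimensional space $\calv_n^\sa$, so it is subadditive, positively homogeneous, and continuous.

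For convexity, given two states $\varrho_1,\varrho_2\in\mathcal C$ with $\xx$-parts $X(a_1,c_1)$ and $X(a_2,c_2)$ and $t\in[0,1]$, linearity of the $\xx$-part map tells us that $t\varrho_1+(1-t)\varrho_2$ has $\xx$-part $X\bigl(ta_1+(1-t)a_2,\ tc_1+(1-t)c_2\bigr)$. Chaining concavity of $\Delta_n$, the membership inequalities for $\varrho_1,\varrho_2$, and convexity of the dual norm yields
\[
\Delta_n\bigl(ta_1+(1-t)a_2\bigr)\ \ge\ t\Delta_n(a_1)+(1-t)\Delta_n(a_2)\ \ge\ t\|c_1\|'_{\xx_n}+(1-t)\|c_2\|'_{\xx_n}\ \ge\ \|tc_1+(1-t)c_2\|'_{\xx_n},
\]
which is exactly the criterion for the convex combination.

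For closedness, suppose $\varrho_k\to\varrho$ with $\varrho_k\in\mathcal C$. Continuity of the $\xx$-part map gives $(a_k,c_k)\to(a,c)$, continuity of the dual norm gives $\|c_k\|'_{\xx_n}\to\|c\|'_{\xx_n}$, and upper semi-continuity of $\Delta_n$ gives
\[
\Delta_n(a)\ \ge\ \limsup_k \Delta_n(a_k)\ \ge\ \limsup_k \|c_k\|'_{\xx_n}\ =\ \|c\|'_{\xx_n},
\]
so $\varrho\in\mathcal C$. There is no real obstacle here: the entire argument hinges on the observation that $\Delta_n$ is an infimum of linear functionals, from which both concavity and upper semi-continuity fall out for free. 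The only point worth flagging is that we need \emph{upper} (not lower) semi-continuity of $\Delta_n$, which is the correct direction for passing $\Delta_n(a_k)\ge\|c_k\|'_{\xx_n}$ to the limit.
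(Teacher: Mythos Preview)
Your proof is correct and follows essentially the same approach as the paper: both arguments rest on the superadditivity/concavity of $\Delta_n$ and the subadditivity of the dual norm for convexity, and on the upper semi-continuity of $\Delta_n$ for closedness. The only difference is packaging: you invoke concavity and upper semi-continuity abstractly as immediate consequences of $\Delta_n$ being an infimum of linear functionals, whereas the paper writes out the superadditivity inequality $\Delta_n(a+b)\ge\Delta_n(a)+\Delta_n(b)$ explicitly and proves upper semi-continuity by a direct $\varepsilon$-argument exploiting monotonicity of $\Delta_n$.
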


\begin{proof}
For given $a,b\in \calv_n^+$ and $c,d\in \calv_n^\sa$, we have the inequalities
$$
\Delta_n(a+b)\ge \Delta_n(a)+\Delta_n(b),\qquad
\|c\|^\prime_{\xx_n}+\|d\|^\prime_{\xx_n}\ge \|c+d\|^\prime_{\xx_n}.
$$
Therefore, we see that ${\mathcal C}$ is a convex set.
To show that $\mathcal C$ is closed, take a sequence $\{\varrho^m\}$ in ${\mathcal C}$ with the {\sf
X}-parts $X(a^m,c^m)$ which converges to $\varrho$ with the {\sf
X}-part $X(a,c)$. Take $s\in \calv_n^+$ with $\delta_n(s)=1$ such
that $\lan a,s\ran \le \Delta_n(a)+\varepsilon$. Put
$\alpha=\varepsilon/\|s\|_1$. Then, for every $m$ with
$\|a^m-a\|_\infty <\alpha$, we have
$$
\|c^m\|^\prime_{\xx_n}\le \Delta_n(a^m)\le\Delta_n(a+\alpha{\bf 1})
\le \lan a+\alpha{\bf 1}, s\ran \le\lan a,s\ran +\alpha\|s\|_1
\le\Delta_n(a)+2\epsilon.
$$
Therefore, we see that $\|c\|^\prime_{\xx_n}\le \Delta_n(a)+2\varepsilon$.
Because $\varepsilon>0$ was arbitrary, we conclude that $\varrho$ belongs to ${\mathcal C}$.
\end{proof}

It seems to be well known among specialists that every self-adjoint matrix in the tensor product
can be expressed as a linear combination of tensor products of self-adjoint matrices.
See \cite{choi-effros-inject} for example. Indeed,
every self-adjoint element $z=\sum_{k=1}^n v_k\ot w_k\in (V\ot W)_\sa$ in the tensor product $V\ot W$ of
$*$-vector spaces $V$ and $W$ can be written by
$$
\begin{aligned}
z&=\frac 12\sum_{k=1}^n v_k\ot w_k + \frac 12\sum_{k=1}^n v_k^*\ot w_k^*\\
&=\sum_{k=1}^n\left(\frac{v_k+v_k^*}2\right)\ot \left(\frac{w_k+w_k^*}2\right)
-\sum_{k=1}^n\left(\frac{v_k-v_k^*}{2{\rm i}}\right)\ot \left(\frac{w_k-w_k^*}{2{\rm i}}\right),
\end{aligned}
$$
which belongs to tensor product $V_\sa\ot W_\sa$ of self-adjoint parts.
Now, we denote by ${\mathcal D}$, ${\mathcal T}$ and ${\mathcal S}$
the convex sets of all states, PPT states and separable states, respectively.
Then the above decomposition shows that both ${\mathcal S}$ and ${\mathcal D}$ generate the affine manifold
consisting of all self-adjoint matrices with trace one.
Therefore, we can apply Proposition \ref{general}
for two convex sets ${\mathcal C}\cap{\mathcal T}$ and ${\mathcal T}$,
since ${\mathcal S}\subset {\mathcal C}\cap{\mathcal T}\subset {\mathcal T}\subset {\mathcal D}$.

Take $c\in\calv_n^\sa$ so that $|c_\bfi|=1$ for each $\bfi\in I_{[n]}$, and consider the following $n$-qubit self-adjoint matrix
\begin{equation}\label{bkiughjh}
\varrho_t= 2^{-n} X({\bf 1},tc)
\end{equation}
for $t\ge 0$. We have the following:
\begin{itemize}
\item
$\varrho_t\in{\mathcal D}\ \Longleftrightarrow\ \varrho_t\in{\mathcal T} \Longleftrightarrow\ t\le 1$;
\item
$\varrho_t\in{\mathcal S}\ \Longleftrightarrow\ \varrho_t\in{\mathcal C}\cap{\mathcal T}\
\Longleftrightarrow\ \Delta_n({\bf 1})\ge \|tc\|^\prime_{\xx_n}\ \Longleftrightarrow\ t\le 1/\|c\|^\prime_{\xx_n}$.
\end{itemize}
Take $t_0=1/\|c\|^\prime_{\xx_n}$. We have $t_0<1$ by Theorem
\ref{sep-ppt} whenever we take $c\in\calv_n^\sa$ with $|c_\bfi|=1$
and nontrivial phase difference. Then we see that $\varrho_{t_0}$ is
a boundary point of ${\mathcal C}\cap{\mathcal T}$ and an interior
point of ${\mathcal T}$. Therefore, we conclude that ${\mathcal
T}\setminus{\mathcal C}\cap{\mathcal T}$ has the nonempty interior.
This tells us that Theorem \ref{criterion} detects nonzero volume of
PPT entanglement.

We also see that $\varrho_{t_0}$ belongs to $\partial{\mathcal S}\cap\inte{\mathcal T}$. Therefore,
$\varrho_{t_0}$ is an $n$-qubit boundary separable state with full ranks in the sense of \cite{chen_dj_boundary_sep},
because $\varrho\in \inte{\mathcal T}$ if and only if all the partial transposes of $\varrho$ have the full ranks.
Such states have been constructed in $3\otimes 3$ system \cite{ha-kye-sep-face}, $2\otimes 4$ system \cite{ha-kye_2x4}
and three qubit system \cite{{kye_multi_dual},{kye-viet}}.
We recall that a nontrivial phase difference occurs
only when $n\ge 3$, because dimension of $\calv_n^\ph\ominus \image\Theta_n$ is $2^{n-1}-n=0$ for $n=1,2$.

\section{Conclusion}

In this paper, we have defined two numbers $\Delta_n(a)$ and $\|c\|^\prime_{\xx_n}$ arising from
the diagonal part $a\in\calv_n^+$ and the anti-diagonal part $c\in\calv_n^\sa$ of an $n$-qubit {\sf X}-state $X(a,c)$, and showed that
$\varrho=X(a,c)$ is separable if and only if the inequality
$$
\Delta_n(a)\ge \|c\|^\prime_{\xx_n}
$$
holds. Since the {\sf X}-part of a separable $n$-qubit state is again separable, this inequality gives rise to
a necessary criterion for separability, which detects PPT entanglement of nonzero volume.
Two numbers $\Delta_n(a)$ and $\|c\|^\prime_{\xx_n}$ in the above inequality have natural upper and lower bounds
\begin{equation}\label{sep_PPT}
\min_{\bfi\in I_{[n]}}\{ \sqrt{a_{\bfi}a_{\bar\bfi}} \} \ge \Delta_n(a),\qquad
\|c\|^\prime_{\xx_n}\ge \|c\|_\infty,
\end{equation}
respectively, by (\ref{delta=esi}) and (\ref{dual-norm-ineq}). Note that $\varrho=X(a,c)$ is of PPT if and only if
the inequality $\min \{\sqrt{a_{\bfi}a_{\bar\bfi}}\}\ge \|c\|_\infty$
holds, and so the strict inequalities in (\ref{sep_PPT}) reflect the existence of PPT entanglement.

In order to estimate the above two numbers $\Delta_n(a)$ and $\|c\|^\prime_{\xx_n}$ more precisely,
we have introduced the notion of irreducible balanced multisets
of indices, and defined the number $\tilde\Delta_n(a)$ which is an upper bound for $\Delta_n(a)$. The number
$\tilde\Delta_n(a)$ can be easily evaluated with the entries of $a$, and actually coincides with $\Delta_n(a)$
for $n=1,2,3$.
It seems to be very difficult to evaluate the number $\delta_n(s)$ in terms of the entries,
even for the case of $n=3$. In this regard, it is remarkable that its \lq dual\rq\ object $\Delta_3(a)$ can be
evaluated by the identity $\Delta_3(a)=\tilde\Delta_3(a)$.
It would be very interesting to know if the identity $\Delta_n(a)=\tilde\Delta_n(a)$ holds for $n\ge 4$.

The norm  $\|u\|_{\xx_n}$ and its dual norm $\|c\|^\prime_{\xx_n}$
depend on the phase parts of $u\in\calv_n^\sa$ and $c\in \calv_n^\sa$ as well as the magnitude parts.
We have introduced the notions of phase identities and phase differences to explain these phenomena.
Nontrivial phase differences appear only when $n\ge 3$, and this reflects the fact that there exists no
PPT entanglement in the two qubit system.
We gave a lower bound for the dual norm $\|c\|_{\xx_n}'$ to see that our criteria detects nonzero volume of PPT entanglement
whenever $n\ge 3$. Evaluation of $\|c\|^\prime_{\xx_n}$ in terms of entries seems to be very difficult, even though
it was possible in several cases of the three qubit system \cite{{chen_han_kye},{han_kye_GHZ},{han_kye_phase}}.
It would be nice to evaluate $\|c\|^\prime_{\xx_n}$ when the entries of $c\in\calv_n^\sa$ share a common magnitude.
See \cite{{chen_han_kye},{han_kye_phase}} for the formula in the three qubit case.

Note that there are other notions for separability like bi-separability and full bi-separability, according to bi-partitions of local systems.
Such notions of separability have been already characterized for {\sf X}-states in \cite{gao,guhne10,han_kye_optimal,Rafsanjani,SU}.
In these characterization, the phase part of anti-diagonal plays no role.


\end{document}